\newtheorem{theorem}{Theorem}
\newtheorem{corollary}{Corollary}
\newtheorem{proposition}{Proposition}
\newtheorem{definition}{Definition}
\begin{document}

\title{The Price of Uncertainty in Present-Biased Planning\thanks{Work supported by the European Research Council, Grant Agreement No. 691672.}}
\author[1]{Susanne Albers}
\author[1]{Dennis Kraft}
\affil[1]{Department of Computer Science, Technical University of Munich}
\affil[ ]{{\em \{albers, kraftd\}@in.tum.de}}
\date{}
\maketitle


\begin{abstract}
The tendency to overestimate immediate utility is a common cognitive bias.
As a result people behave inconsistently over time and fail to reach long-term goals.
Behavioral economics tries to help affected individuals by implementing external incentives.
However, designing robust incentives is often difficult due to imperfect knowledge of the parameter $\beta \in (0,1]$ quantifying a person's present bias.
Using the graphical model of Kleinberg and Oren~\cite{KO}, we approach this problem from an algorithmic perspective.
Based on the assumption that the only information about $\beta$ is its membership in some set $B \subset (0,1]$, we distinguish between two models of uncertainty: one in which $\beta$ is fixed and one in which it varies over time.
As our main result we show that the conceptual loss of efficiency incurred by incentives in the form of penalty fees is at most $2$ in the former and $1 + \max B/\min B$ in the latter model.
We also give asymptotically matching lower bounds and approximation algorithms.
\end{abstract}

\section{Introduction}
\label{sec:intro}

Many goals in life such as losing weight, passing an exam or paying off a loan require long-term planning.
But while some people stick to their plans, others lack self-control; they eat unhealthy food, delay their studies and take out new loans.
In behavioral economics the tendency to change a plan for no apparent reason is known as {\em time-inconsistent behavior}.
The questions are, what causes these inconsistencies and why do they affect some more than others?
A common explanation is that people make present biased decisions, i.e., they assign disproportionately greater value to the present than to the future.
In~this simplifying model a person's behavior is the mere result of her present bias and the setting in which she is placed.
However, the interplay between these two factors is intricate and sometimes counter-intuitive as the following example demonstrates:

Consider two runners Alice and Bob who have two weeks to prepare for an important race.
Each week they must choose between two types of workout.
Type $A$ always incurs an effort of $1$, whereas type $B$ incurs an effort of $3$ in the first and $9$ in the second week.
Since $A$ offers less preparation than $B$, Alice and Bob's effort in the final race is $13$ if they consistently choose $A$ and $1$ if they consistently choose $B$.
Furthermore, $A$ and $B$ are incompatible in the sense that switching between the two will result in an effort of $16$ in the final race.
Figure~\ref{fig:ex} models this setting as a directed acyclic graph $G$ with terminal nodes $s$ and $t$.
The intermediate nodes $v_X$ and $v_{XY}$ represent a person's state after completing the workouts $X,Y \in \{A,B\}$.
To move forward with the training, Alice and Bob must perform the tasks associated with the edges of $G$, i.e., complete workouts and run the race.
Looking at $G$ it becomes clear that two consecutive workouts of type $B$ are the most efficient routine in the long run.
However, this is not necessarily the routine a present biased person will choose.

For instance, assume that Alice and Bob discount future costs by a factor of $a = 1/2 - \varepsilon$ and $b = 1/2 + \varepsilon$ respectively.
We call $a$ and $b$ their present bias.
At the beginning of the first week Alice and Bob compare different workout routines.
From Alice's perspective two workouts of type $A$ are strictly more preferable to two workouts of type $B$ as she anticipates an effort of $1 + a(1 + 13) = 8 - 14\varepsilon$ for the former and $3 + a(9 + 1) = 8 - 10\varepsilon$ for the latter.
A similar calculation for Bob shows that he prefers two workouts of type $B$.
Considering that neither Alice nor Bob finds a mix of $A$ and $B$ particularly interesting at this point, we conclude that Alice chooses $A$ in the first week and Bob $B$.
However, come next week, Bob expects an effort of $1 + b16 = 8 + 16\varepsilon$ for $A$ and $9 + b = 19/2 + \varepsilon$ for $B$. 
Assuming $\varepsilon$ is small enough, $A$ suddenly becomes Bob's preferred option and he switches routines.
Alice on the other hand has no reason to change her mind and sticks to $A$.
As a result she pays much less than Bob during practice and in the final race.
This is remarkable considering that her present bias is only marginally different from Bob's.
Moreover, it seems surprising that only Bob behaves inconsistently, although he is less biased than Alice.

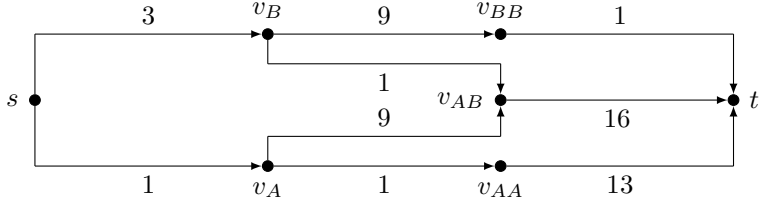
\begin{figure}[t]
	\center
	\begin{tikzpicture}[scale=0.875, nst/.style={draw,circle,fill=black,minimum size=4pt,inner sep=0pt]}, est/.style={draw,>=latex,->}]
		
		\node[nst] (v1) at (0,0) [label=left:\small{$s$}] {};
		\node[nst] (v2) at (3.5,-1) [label=below:\small{$v_A$}] {};
		\node[nst] (v3) at (3.5,1) [label=above:\small{$v_B$}] {};
		\node[nst] (v4) at (7,-1) [label=below:\small{$v_{AA}$}] {};
		\node[nst] (v5) at (7,0) [label=left:\small{$v_{AB}$}] {};
		\node[nst] (v6) at (7,1) [label=above:\small{$v_{BB}$}] {};
		\node[nst] (v7) at (10.5,0) [label=right:\small{$t$}] {};		
			
		\path (v1) edge (0,-1)
		(v1) edge (0,1)
		(0,-1) edge[est] node [below] {\small{$1$}} (v2)
		(0,1) edge[est] node [above] {\small{$3$}} (v3)
		(v2) edge (3.5,-0.55)
		(v3) edge (3.5,0.55)
		(v2) edge[est] node [below] {\small{$1$}} (v4)
		(v3) edge[est] node [above] {\small{$9$}} (v6)
		(3.5,-0.55) edge node [above] {\small{$9$}} (7,-0.55)
		(3.5,0.55) edge node [below] {\small{$1$}} (7,0.55)
		(7,-0.55) edge[est] (v5)
		(7,0.55) edge[est] (v5)
		(v4) edge node [below] {\small{$13$}} (10.5,-1)
		(v5) edge[est] node [below] {\small{$16$}} (v7)
		(v6) edge node [above] {\small{$1$}} (10.5,1)
		(10.5,-1) edge[est] (v7)
		(10.5,1) edge[est] (v7);
		\end{tikzpicture}
	\caption{Task graph of the running scenario}\label{fig:ex}
\end{figure}

\subsection{Related Work}
Traditional economics and game theory are based on the assumption that people maximize their utility in a rational way.
But despite their prevalence, these assumptions disregard psychological aspects of human decision making observed in empirical and experimental research~\cite{Laa}.
For instance, time-inconsistent behavior such as procrastination seems paradox in the light of traditional economics.
Nevertheless, it can be explained readily by a tendency to overestimate immediate utility in long-term planning, see e.g.~\cite{OR2}.
By studying such cognitive biases, behavioral economics tries to obtain more realistic economic~models.

A significant amount of research in this field has been devoted to {\em temporal discounting} in general and {\em quasi-hyperbolic discounting} in particular, see~\cite{FLO} for a survey.
The quasi-hyperbolic discounting model proposed by Laibson~\cite{Lai} is characterized by two parameters: the {\em present bias} $\beta \in (0,1]$ and the {\em exponential discount rate} $\delta \in (0,1]$.
People who plan according to this model have an accurate perception of the present, but scale down any costs and rewards realized $t \geq 1$ time units in the future by a factor of $\beta \delta^t$.
To keep our work clearly delineated in scope, we adopt Akerlof's model of quasi-hyperbolic discounting~\cite{A} and make the following two assumptions:
First, we focus on the present bias $\beta$ and set the exponential discount rate to $\delta = 1$.
Secondly, we assume people to be {\em naive} in the sense that they are unaware of their present bias and only optimize their current perceived utility when making a decision.
Note that Alice and Bob from the previous example behave like agents in Akerlof's model for a present bias of $\beta = 1/2 - \varepsilon$ and $\beta = 1/2 + \varepsilon$ respectively.

Until recently the economic literature lacked a unifying and expressive framework for analyzing time-inconsistent behavior in complex social and economic settings.
Kleinberg and Oren closed this gap by modeling the behavior of naively present biased individuals as a planning problem in task graphs like the one depicted in Figure~\ref{fig:ex}~\cite{KO}.
We introduce this framework formally in Section~\ref{sec:model}.
As a result of Kleinberg and Oren's work, an active line of research at the intersection of computer science and behavioral economics has emerged.
For instance, the graphical model has been used to systematically analyze different types of quasi-hyperbolic discounting agents such as {\em sophisticated} agents who are fully or partially aware of their present bias~\cite{KOR1} and agents whose present bias varies randomly over time~\cite{GILP}.
Furthermore, the graphical model was used to shed light on the interplay between temporal biases and other types of cognitive biases~\cite{KOR2}.

The graphical model is of particular interest to us as it provides a natural framework for a design problem frequently encountered in behavioral economics.
Given a certain social or economic setting, the problem is to improve a time-inconsistent person's performance via various sorts of {\em incentives}, such as  monetary rewards, deadlines or penalty fees, see e.g.~\cite{OR3}.
Using the graphical model, Kleinberg and Oren demonstrate how a strategic choice reduction can incentivize people to reach predefined goals~\cite{KO}.
To implement their incentives, they simply remove the corresponding edges from the task graph.
However, there is a computational drawback to this approach.
As we have shown in previous work, an optimal set of edges to remove from a task graph with $n$ nodes is NP-hard to approximate within a factor less than $\sqrt{n}/3$~\cite{AK}.
A more general form of incentives avoiding these harsh complexity theoretic limitations are penalty fees. 
In the graphical model penalty fees are at least as powerful as choice reduction and admit a polynomial time $2$-approximation~\cite{AK2}.

\subsection{Incentive Design for an Uncertain Present Bias}
Frederick, Loewenstein and O'Donoghue have surveyed several attempts to estimate people's temporal discount functions~\cite{FLO}.
But as estimates differ widely across studies and individuals, the difficulty of predicting a person's temporal discount function becomes apparent.
Clearly, this poses a serious challenge for the design of reliable incentives.
After all, Alice and Bob's scenario demonstrates how arbitrarily small changes in the present bias can cause significant changes in a person's behavior.
In this work we address the effects of incomplete information about a person's present bias in two different notions of uncertainty.

In Section~\ref{sec:upb} we consider naive individuals whose exponential discount rate is $\delta = 1$, but whose present bias $\beta$ is unknown.
The only prior information we have about $\beta$ is its membership in some larger set $B$.
Our goal is to construct incentives that are robust with respect to the uncertainty induced by $B$.
More precisely, we are interested in incentives that work well for any present bias contained in $B$.
An alternative perspective is that we try to construct incentives which are not limited to a single person, but serve an entire population of individuals with different present bias values.
A simple instance of this problem in which a single task must be partitioned and stretched over a longer period of time has been studied by Kleinberg and Oren~\cite{KO}.
But like most research on incentivizing {\em heterogeneous} populations, see e.g.~\cite{OR3}, Kleinberg and Oren's results are restricted to a very specific setting.
They themselves suggest the design of more general incentives as a major research direction for the graphical~framework~\cite{KO}.

Using penalty fees as our incentive of choice and a fixed reward to keep people motivated, we present the first results in this area.
Our contribution is twofold.
On the one hand, we try to quantify the conceptual loss of efficiency caused by incomplete knowledge of $\beta$.
For this purpose we introduce a novel concept called {\em price of uncertainty}, which denotes the smallest ratio between the reward required by an incentive that accommodates all $\beta \in B$ and the reward required by an incentive designed for a specific $\beta \in B$.
We present an elegant algorithmic argument to prove that the price of uncertainty is at most $2$.
Remarkably, this bound holds true independent of the underlying graph $G$ and present bias set $B$.
To complement our result, we construct a family of graphs $G$ and present bias sets $B$ for which the price of uncertainty converges to a value strictly greater than~$1$.
On the other hand, we consider the computational problem of constructing penalty fees that work for all $\beta \in B$, but require as little reward as possible.
Drawing on the same algorithmic ideas we used to bound the price of uncertainty yields a polynomial time $2$-approximation.
Furthermore, we present a non-trivial proof to show that the decision version of the problem is contained in NP.
Since all hardness results of~\cite{AK2} also apply under uncertainty, we know that there is no $1{.}08192$-approximation unless ${\rm P = NP}$.

\subsection{Incentive Design for a Variable Present Bias}
In Section~\ref{sec:vpb} we generalize our notion of uncertainty to individuals whose present bias $\beta$ may change arbitrarily over time within the set $B$.
This model is inspired by work of Gravin et al.~\cite{GILP}, except that we do not rely on the assumption that $\beta$ is drawn independently from a fixed probability distribution.
Instead, our goal is to design penalty fees that work well for all possible sequences of $\beta$ over time.
We believe this to be an interesting extension of the fixed parameter case as the variability of $\beta$ may capture changes in a person's temporal discount function caused by unforeseen cognitive biases different from her present bias. 
As a result we obtain more robust penalty fees.

Again, our contribution is twofold.
On the one hand, we introduce the {\em price of variability} to quantify the conceptual loss of efficiency caused by unpredictable changes in $\beta$.
Similar to the price of uncertainty, we define this quantity to be the smallest ratio between the reward required by an incentive that accommodates all possible changes of $\beta \in B$ over time and the reward required by an incentive designed for a specific and fixed $\beta \in B$.
However, unlike the price of uncertainty, the price of variability has no constant upper bound.
Instead, the ratio seems closely related to the {\em range} $\tau = \max{B}/\min{B}$ of the set $B$.
By generalizing our algorithm from Section~\ref{sec:upb} we obtain an upper bound of $1 + \tau$ for the price of variability.
To complement this result, we construct a family of graphs $G$ for which the price of variability converges to $\tau/2$.
On the other hand, we consider the computational aspects of constructing penalty fees for a variable $\beta$.
As a result of the unbounded price of variability, we are not able to come up with a constant polynomial time approximation.
Instead, we obtain a ${(1 + \tau)}$-approximation.
However, by using a sophisticated reduction from VECTOR SCHEDULING, we prove that no efficient constant approximation is possible unless ${\rm NP = ZPP}$.
We conclude our work by studying a curious special case of variability in which individuals may temporarily lose their present bias.
For this scenario, which is characterized by the assumption that $1 \in B$, optimal penalty fees can be computed in polynomial time.

\section{The Model}
\label{sec:model}

In the following we introduce Kleinberg and Oren's graphical framework~\cite{KO}.
Let $G =(V,E)$ be a directed acyclic graph with $n$ nodes that models some long-term project.
The start and end states are denoted by the terminal nodes $s$ and~$t$.
Furthermore, each edge $e$ of $G$ corresponds to a specific task whose inured effort is captured by a non-negative cost $c(e)$.
To finish the project, a present biased agent must sequentially complete all tasks along a path from $s$ to~$t$.
However, instead of following a fixed path, the agent constructs her path dynamically according to the following simple procedure: 

When located at any node $v$ different from $t$, the agent tries to evaluate the minimum cost she needs to pay in order to reach $t$.
For this purpose she considers all outgoing edges $(v,w)$ of her current position $v$.
Because the tasks associated with these edges must be performed immediately, the agent assesses their cost correctly.
In contrast, all future tasks, i.e., tasks on a path from $v$ to $t$ not incident to $v$, are discounted by her present bias of $\beta \in (0,1]$.
As a result, we define her {\em perceived cost} for taking $(v,w)$ to be $d_{\beta}(v,w) = c(v,w) + \beta d(w)$, where $d(w)$ denotes the cost of a cheapest path from $w$ to $t$.
Furthermore, we define $d_{\beta}(v) = \min\{c(v,w) + \beta d(w) \mid (v,w) \in E\}$ to be the agent's {\em minimum perceived cost} at $v$.
Since the agent is oblivious to her own present bias, she only traverses edges $(v,w)$ for which $d_{\beta}(v,w) = d_{\beta}(v)$.
Ties are broken arbitrarily.
Once the agent reaches the next node, she reiterates this process.

To motivate the agent, a non-negative reward $r$ is placed at $t$.
Because the agent must reach $t$ before she can collect $r$, her {\em perceived reward} for reaching $t$ is $\beta r$ at each node different from $t$.
When located at $v \neq t$, the agent is only motivated to proceed if ${d_{\beta}(v) \leq \beta r}$.
Otherwise, if $d_{\beta}(v) > \beta r$, she quits.
We say that $G$ is {\em motivating}, if she does not quit while constructing her path from $s$ to~$t$.
Note that sometimes the agent can construct more than one path from $s$ to $t$ due to ties in the perceived cost of incident edges.
In this case, $G$ is considered motivating if she does not quit on any such path.

For the sake of a clear presentation, we will assume throughout this work that each node of $G$ is located on a path from $s$ to $t$.
This assumption is sensible because the agent can only visit nodes reachable from $s$.
Furthermore, she is not willing to enter nodes that do not lead to the reward at $t$.
Consequently, only nodes that are on a path from $s$ to $t$ are relevant to her behavior.
All nodes not satisfying this property can be removed from $G$ in a simple preprocessing step.

\subsection{Alice and Bob's Scenario}

To illustrate the model, we revisit Alice and Bob's scenario.
The task graph $G$ is depicted in Figure~\ref{fig:ex}.
Remember that $a = 1/2 - \varepsilon$ and $b = 1/2 + \varepsilon$ denote Alice and Bob's respective present bias.
For convenience let $0 < \varepsilon \leq 1/54$.
Furthermore, assume that a reward of $r = 27$ is awarded upon reaching~$t$.

We proceed to analyze Alice and Bob's walk through $G$.
At their initial position $s$ they must decide whether they move to $v_A$ or $v_B$.
For this purpose they try to find a path that minimizes the perceived cost.
As the more present biased person, Alice's favorite path is $s,v_{A},v_{AA},t$ with a perceived cost of $d_a(s) = d_a(s,v_A) = 8 - 14\varepsilon$.
By choice of $\varepsilon$ this cost is covered by her perceived reward ${ar = 27/2 - 27\varepsilon}$.
Consequently, she is motivated to traverse the first edge and moves to $v_A$.
A similar argument shows that Bob moves to $v_B$.
Once they reach their new nodes, Alice and Bob reevaluate plans.
From Alice's perspective $v_{A},v_{AA},t$ is still the cheapest path to $t$.
Bob, however, suddenly prefers $v_{B},v_{AB},t$ to his original plan.
Nevertheless, both of their perceived cost remains covered by their perceived reward and they move to $v_{AA}$ and $v_{AB}$ respectively.
At this point the only option is to take the direct edge to~$t$.
For Alice the perceived cost at $v_{AA}$ is sufficiently small to let her reach $t$.
In contrast, Bob's perceived cost~of ${d_b(v_{AB}) = 16}$ exceeds his perceived reward of $br = 27/2 + 27\varepsilon$ and he quits.

\subsection{Cost Configurations}

Bob's behavior in the previous example demonstrates how present biased decisions can deter people from reaching predefined goals.
To ensure an agent's success it is therefore sometimes necessary to implement external incentives such as penalty fees.
In the graphical model, penalty fees allow us to arbitrarily raise the cost of edges in $G$.
More formally, let $\tilde{c}$ be a so called {\em cost configuration}, which assigns a non-negative extra cost $\tilde{c}(e)$ to all edges $e$ of $G$.
The result is a new task graph $G_{\tilde{c}}$, whose edges $e$ have a cost of $c(e) + \tilde{c}(e)$.
A present biased agent navigates through $G_{\tilde{c}}$ according to the same rules applying in $G$.
We say that $\tilde{c}$ is motivating if and only if $G_{\tilde{c}}$ is.
To avoid ambiguity we annotate our notation whenever we consider a specific $\tilde{c}$, e.g., we write $d_{\tilde{c}}$ and $d_{\beta,\tilde{c}}$ instead of $d$ and $d_{\beta}$.

We conclude this section with a brief demonstration of the positive effects penalty fees can have in Alice and Bob's scenario.
Let $\tilde{c}$ be a cost configuration that assigns an extra cost of $\tilde{c}(v_{B},v_{AB}) = 1/2$ to $(v_{B},v_{AB})$ and $\tilde{c}(e) = 0$ to all other edges $e \neq (v_{B},v_{AB})$.
Note that $G$ and $G_{\tilde{c}}$ are identical task graphs except for the cost of $(v_{B},v_{AB})$.
Because Alice does not plan to take $(v_{B},v_{AB})$ on her way through $G$ and has even less reason to do so in $G_{\tilde{c}}$, we know that $\tilde{c}$ does not affect her behavior.
For similar reasons, $\tilde{c}$ does not affect Bob's choice to move to $v_{B}$.
However, once Bob has reached $v_{B}$ his perceived cost of the path $v_{B},v_{AB},t$ is $d_{b,\tilde{c}}(v_B, v_{AB}) = 19/2 + 16\varepsilon$, whereas his perceived cost of $v_{B},v_{BB},t$ is only $d_{b,\tilde{c}}(v_B, v_{BB}) = 19/2 + \varepsilon$.
Since the latter option appears to be cheaper and is covered by his perceived reward, Bob proceeds to $v_{BB}$ and then onward to $t$.
As a result $\tilde{c}$ yields a task graph that is motivating for Alice and Bob alike.
This is a considerable improvement to the original task graph.

\section{Uncertain Present Bias}
\label{sec:upb}

In this section we consider agents whose present bias $\beta$ is uncertain in the sense that our only information about $\beta$ is its membership in some set $B \subset (0,1]$.
We call $B$ the {\em present bias set}.
For technical reasons we assume that $B$ can be expressed as the union of constantly many closed subintervals from the set $(0,1]$.
This way the intersection of $B$ with a closed interval is either empty or contains an efficiently computable minimal and maximal element.
To measure the degree of uncertainty induced by $B$, we define the range of $B$ as $\tau = \max{B}/\min{B}$.

\subsection{A Decision Problem}

Our goal is to construct a cost configuration $\tilde{c}$ that is motivating for all $\beta \in B$, but requires as little reward as possible. 
To assess the complexity of this task, let UNCERTAIN PRESENT BIAS (UPB) be the following decision problem:
\begin{definition}[UPB]
	Given a task graph $G$, present bias set $B$ and reward $r > 0$, decide whether a cost configuration $\tilde{c}$ motivating for all $\beta \in B$ exists.
\end{definition}
If $\tau = 1$, i.e., $B$ only contains a single present bias parameter, UPB is identical to the decision problem MOTIVATING COST CONFIGURATION (MCC) studied in \cite{AK2}.
Since MCC is NP-complete, UPB must be NP-hard.
But unlike MCC it is not immediately clear if UPB is also contained in NP.
The reason is that proving ${\rm MCC} \in {\rm NP}$ only requires to verify whether a given cost configuration is motivating for a single value of $\beta$; a property that can be checked in polynomial time~\cite{AK}.
However, proving ${\rm UPB} \in {\rm NP}$ requires to verify whether a given cost configuration is motivating for all $\beta \in B$.
Taking into account that $B$ may very well be an infinite set, it becomes clear that we cannot check all values of $\beta$ individually.
Interestingly, we do not have to; checking a finite subset $B' \subseteq B$ of size $\mathcal{O}(n^2)$ turns out to be sufficient.

\begin{proposition}\label{prop:mccunp}
	For any task graph $G$, reward $r$ and present bias set $B$ a finite subset $B' \subseteq B$ of size $\mathcal{O}(n^2)$ exists such that $G$ is motivating for all $\beta \in B$ if it is motivating for all $\beta \in B'$.
\end{proposition}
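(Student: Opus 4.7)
The plan is to exploit the piecewise-linear structure of the agent's perceived costs to show that as $\beta$ varies over $(0,1]$, the agent's behavior changes only at $O(n^2)$ \emph{breakpoints}. Observe that whether $G$ is motivating for a given $\beta$ depends entirely on, for each node $v$, the set $M(v,\beta) = \{(v,w) \in E : c(v,w) + \beta d(w) = d_\beta(v)\}$ of perceived-cheapest outgoing edges together with the indicator $Q(v,\beta)$ of the condition $d_\beta(v) \leq \beta r$: a $\beta$-agent can follow exactly those $s$-to-$t$ walks whose next edge at each intermediate node $v$ lies in $M(v,\beta)$, and $G$ is motivating iff every such walk visits intermediate nodes only at which $Q(\cdot,\beta) = 1$.

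First I would show that $M(v,\cdot)$ and $Q(v,\cdot)$ are piecewise constant with few breakpoints. At node $v$, $d_\beta(v)$ is the lower envelope of the $\deg(v)$ lines $\beta \mapsto c(v,w) + \beta d(w)$, so $M(v,\beta)$ changes only at the $O(\deg(v))$ breakpoints of this envelope. Moreover $d_\beta(v) - \beta r$ is piecewise linear and concave in $\beta$, so its positivity set $\{\beta : d_\beta(v) > \beta r\}$ is an open subinterval of $(0,1]$; hence $Q(v,\cdot)$ contributes at most two further breakpoints per node. Summing over $v$ yields $O(|E| + n) = O(n^2)$ breakpoints which partition $(0,1]$ into $O(n^2)$ maximal \emph{pieces}---each either a single breakpoint or an open interval between consecutive breakpoints---on which the entire tuple $(M(v,\beta), Q(v,\beta))_{v \in V}$ is constant.

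Since whether $G$ is motivating is determined pointwise by this tuple, motivating-ness is itself constant on each piece. It then suffices to define $B'$ by selecting one $\beta$ from each piece whose intersection with $B$ is non-empty, giving $|B'| = O(n^2)$ and guaranteeing that any $\beta \in B$ shares its motivating status with some $\beta' \in B'$.

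The subtle step is distinguishing breakpoints from the surrounding open intervals rather than lumping them together. At an $M$-breakpoint $\beta^{*}$, the set $M(v,\beta^{*})$ strictly contains the unique minimizers that appear on either adjacent side, so the agent may suddenly discover new walks at $\beta^{*}$ that could force a quit where none occurred for neighboring $\beta$. Treating each breakpoint as its own piece and including it in $B'$ whenever it lies in $B$ closes this gap and completes the argument.
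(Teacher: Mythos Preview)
Your proof is correct, but it takes a somewhat different route from the paper's. Both arguments exploit the piecewise-linear dependence of perceived costs on $\beta$, yet they diverge in how the quit condition is handled.

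The paper never introduces the indicator $Q(v,\beta)$. Instead, for each edge $e=(v,w)$ it defines the closed interval $B_{v,w}\subseteq[0,1]$ of biases for which $e$ lies in the preference profile, and sets $B'=\{\min(B_e\cap B):B_e\cap B\neq\emptyset\}$, so $|B'|\le |E|$. The crucial extra observation is that $d_\beta(v,w)/\beta=c(v,w)/\beta+d(w)$ is monotone decreasing in $\beta$: hence if an agent with bias $b$ could follow a path $P$ and quits at some edge, then the agent with bias $a=\min(B_P\cap B)\le b$---which can still follow $P$ because $a\in B_P$, and which belongs to $B'$ because $a=\min(B_e\cap B)$ for some $e\in P$---quits at least as early. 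This monotonicity absorbs the quit condition entirely, so only preference breakpoints matter and only the left endpoint of each $B_e\cap B$ need be sampled.

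Your approach instead freezes the full behavioral tuple $(M(v,\beta),Q(v,\beta))_v$ by adding up to two extra breakpoints per node for $Q$, then samples one point per resulting piece (treating breakpoints as their own pieces, which you correctly note is necessary because $M$ enlarges there). This is more brute-force but entirely valid, and arguably more robust: it does not depend on the specific monotonicity of the reward comparison and would transfer unchanged to variants where that monotonicity fails. The paper's argument is leaner---it yields a slightly smaller $B'$ and avoids separating breakpoints from open intervals---but relies on an inequality specific to this model.
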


\begin{proof}
	Our proof consists of two steps. 
	First, we determine for what $\beta \in [0,1]$ the {\em preference profile} $p_{\beta}(v) = \{(v,w') \mid d_{\beta}(v,w') = d_{\beta}(v)\}$, i.e., the set of edges an agent with present bias $\beta$ is inclined to traverse, contains a given edge $(v,w)$.
	Let $B_{v,w}$ be the set of all $\beta$ satisfying this property.
	More formally we define $B_{v,w}$ as the intersection $\{\beta \mid (v,w) \in p_{\beta}(v)\} \cap [0,1]$. 
	As we are going to show, $B_{v,w}$ is a closed, possibly empty, subinterval of $[0,1]$. 
	For the second step we define $B'$ to be the set $\{\min(B_e \cap B) \mid B_e \cap B \neq \emptyset\}$.
	Note that by our assumption on $B$ the minimum of $B_e \cap B$ is guaranteed to exist as long as $B_e \cap B \neq \emptyset$. 
	Furthermore, $B'$ contains at most $|E|$ elements and therefore has a size of $\mathcal{O}(n^2)$. 
	Using a proof by contradiction, we argue that $G$ is motivating for all $\beta \in B$ if it is motivating for all $\beta \in B'$.
	This completes the proof.
	
	To see that $B_{v,w}$ is a closed subinterval of $[0,1]$, it is instructive to observe how the perceived cost of $(v,w)$ changes with respect to $\beta$.
	For this purpose, let ${\ell_{v,w}(\beta) = c(v,w) + \beta d(w)}$ denote the perceived cost $d_{\beta}(v,w)$ with respect to $\beta$.
	Clearly, the function $\ell_{v,w}$ is linear and so are the functions $\ell_{v,w'}$ associated with the incident edges of $(v,w)$.
	Together, these functions yield the line arrangement $L_v = \{\ell_{v,w'} \mid (v,w') \in E\}$.
	By definition of the agent's preference profile, $p_{\beta}(v)$ contains $(v,w)$ if and only if $d_{\beta}(v,w) \leq d_{\beta}(v,w')$ for all $(v,w') \in E$.
	The values of $\beta \in [0,1]$ that satisfy this property are exactly those for which the line $\ell_{v,w}$ is on the lower envelope of the line arrangement $L_v$.
	From the basic structure of line arrangements such as $L_v$ we can immediately conclude that $B_{v,w}$ must be a closed subinterval of $[0,1]$, see e.g. \cite{E}.

	To conclude the proof it remains to show that $G$ is motivating for each $\beta \in B$ if it is motivating for each $\beta \in B'$.
	For the sake of contradiction assume that $G$ is motivating for each $\beta \in B'$ but not for some $b \in B$.
	As a result there must exist a path $P$ from $s$ to $t$ along which an agent with present bias $b$ may walk and which contains an edge $(v,w)$ such that $d_{b}(v,w)/b > r$.
	Let $B_P = \bigcap_{e \in P} B_{e}$ be the set of all present bias parameters for which agents can construct $P$.
	From this definition it is immediately apparent that $b$ is contained $B_P$ and therefore $B_P \cap B \neq \emptyset$.
	We now consider the structure of $B_P$.
	Since each set $B_{e}$ associated with an $e \in P$ is a closed interval, so is $B_P$.
	In particular, as $B_P \cap B$ is not empty, one of these sets $B_{e}$ must satisfy $\min(B_{e} \cap B) = \min(B_P \cap B)$.
	Let $a = \min(B_{e} \cap B)$ denote the corresponding present bias value.
	By definition of $a$ it holds true that $a \in B'$ and $a \leq b$.
	Moreover, because $a \in B_P$, we know that an agent with present bias $a$ may very well construct the path $P$.
	However, once this agent reaches $(v,w)$, her perceived cost exceeds her perceived reward as
	\[\frac{d_{a}(v,w)}{a} = \frac{c(v,w)}{a} + d(w) \geq \frac{c(v,w)}{b} + d(w) = \frac{d_{b}(v,w)}{b} > r.\]
	Consequently, $G$ is not motivating for $a$.
	However, this is a contradiction to the assumption that $G$ is motivating for all $\beta \in B'$.
\end{proof}

Note that the proof of Proposition~\ref{prop:mccunp} does not only imply that $B'$ exists, but also that it can be constructed in polynomial time.
We may therefore conclude that UPB is indeed contained in NP.

\begin{corollary}\label{coll:upbnp}
	UPB is NP-complete.
\end{corollary}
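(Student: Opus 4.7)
The plan is to derive Corollary~\ref{coll:upbnp} as a direct consequence of the earlier discussion, splitting the argument into the two directions required for NP-completeness. The NP-hardness side is essentially free: if we instantiate UPB with a singleton present bias set, say $B = \{\beta\}$ so that $\tau = 1$, then the problem coincides verbatim with MOTIVATING COST CONFIGURATION (MCC). Since MCC was shown to be NP-complete in \cite{AK2}, there is a polynomial-time reduction from MCC to UPB obtained simply by wrapping $\beta$ into a singleton set, which establishes that UPB is NP-hard.

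The more delicate direction is membership in NP, and this is where Proposition~\ref{prop:mccunp} does the real work. The natural certificate for a yes-instance is a cost configuration $\tilde{c}$, with each extra cost encoded in polynomial size (standard bounds on the magnitudes of meaningful penalty fees, as discussed in \cite{AK2}, justify the polynomial certificate size). The obstacle is that verifying the configuration is, in principle, an uncountable task: we must confirm that $G_{\tilde{c}}$ is motivating for every $\beta \in B$, and $B$ may be infinite.

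To overcome this, I would argue that the verifier proceeds in two steps. First, it constructs the finite subset $B' \subseteq B$ guaranteed by Proposition~\ref{prop:mccunp}. Following the construction inside that proof, for every edge $(v,w)$ one assembles the line arrangement $L_v$, reads off the closed interval $B_{v,w} \subseteq [0,1]$ on which $\ell_{v,w}$ lies on the lower envelope, and then takes $\min(B_{v,w} \cap B)$ whenever this intersection is nonempty. Because $B$ is a union of constantly many closed intervals by assumption, each such intersection and minimum is computable in polynomial time, and the total work is polynomial in $n$. The resulting set $B'$ has size $\mathcal{O}(n^2)$. Second, for each $\beta \in B'$ the verifier runs the polynomial-time motivation check of \cite{AK} on $G_{\tilde{c}}$ with that particular $\beta$. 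Proposition~\ref{prop:mccunp} guarantees that acceptance on all of $B'$ is equivalent to acceptance on all of $B$, so the verifier is both sound and complete.

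The only step that requires care is bounding the cost of constructing $B'$, and in particular justifying that the minima $\min(B_e \cap B)$ are genuinely computable under our structural assumption on $B$; everything else is a straightforward assembly of Proposition~\ref{prop:mccunp}, the verification algorithm of \cite{AK}, and the reduction from MCC. Combining NP-hardness with polynomial-time verifiability yields NP-completeness.
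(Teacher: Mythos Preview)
Your proposal is correct and follows essentially the same route as the paper: NP-hardness via the singleton-$B$ reduction to MCC, and NP-membership by using a cost configuration as certificate and invoking Proposition~\ref{prop:mccunp} (applied to $G_{\tilde{c}}$) to reduce verification to the polynomially many present-bias values in $B'$, each checked with the algorithm of~\cite{AK}. The paper's own argument is the one-line remark preceding the corollary, so your version is simply a more explicit unpacking of the same idea.
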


\subsection{The Price of Uncertainty}

Since UPB is NP-complete, it makes sense to consider the corresponding optimization problem UPB-OPT.
For this purpose, let $r(G,B)$ be the infimum over all rewards admitting a cost configuration motivating for all $\beta \in B$ and define:
\begin{definition}[UPB-OPT]
	Given a task graph $G$ and present bias set $B$, determine $r(G,B)$.
\end{definition}
Clearly, UPB-OPT must be at least as hard as the optimization version of MCC.
Consequently, we know that UPB has no PTAS and is NP-hard to approximate within a ratio less than $1.08192$~\cite{AK2}.
But does the transition from a certain to an uncertain $\beta$ reduce approximability?

Setting complexity theoretic considerations aside for a moment, an even more general question arises:
How does the transition from a certain to an uncertain $\beta$ affect the efficiency of cost configurations assuming unlimited computational resources?
To quantify this conceptual difference in efficiency, we look at the smallest ratio between optimal cost configurations motivating for all $\beta \in B$ and optimal cost configurations motivating for a specific $\beta \in B$.
We call this ratio the {\em price of uncertainty}.
\begin{definition}[Price of Uncertainty]\label{def:pou}
	Given a task graph $G$ and a present bias set $B$, the price of uncertainty is defined as $r(G,B)/\sup\{r(G,\{\beta\}) \mid \beta \in B\}$.
\end{definition}

Let us illustrate the price of uncertainty by going back to Alice and Bob's scenario and assume that $B = \{a,b\}$ with $a = 1/2 - \varepsilon$ and $1/2 + \varepsilon$.
In other words, the agent either behaves like Alice or she behaves like Bob, but we do not know which.
It is easy to see that in either case the agent minimizes her maximum perceived cost on the way from $s$ to $t$ by taking the path $P = s,v_B,v_{BB},t$.
This minmax cost, which is either $d_a(v_B, v_{BB}) = 19/2 - \varepsilon$ or $d_b(v_B, v_{BB}) = 19/2 + \varepsilon$, provides two lower bounds for the necessary reward when divided by the respective present bias.
More formally, it holds true that ${r(G,\{a\}) \geq (19/2 - \varepsilon)/(1/2 - \varepsilon)}$ and ${r(G,\{b\}) \geq (19/2 + \varepsilon)/(1/2 + \varepsilon)}$.
However, as we have seen in Section~\ref{sec:model}, neither Alice nor Bob are willing to follow $P$ without external incentives.
To discourage the agent from leaving $P$, we assign an extra cost of $\tilde{c}(s,v_A) = 5\varepsilon$ to $(s,v_A)$, $\tilde{c}(v_B,v_{AB}) = 1/2 + 16\varepsilon$ to $(v_B,v_{AB})$ and $\tilde{c}(e) = 0$ otherwise.
This extra cost does not affect the agent's maximum perceived cost along $P$, which she still experiences at $(v_B,v_{BB})$.
As a result, our bounds for $r(G,\{a\})$ and $r(G,\{b\})$ are tight and we get ${\sup\{r(G,\{\beta\}) \mid \beta \in B\} = r(G,\{a\})}$.
Moreover, because we have used the same cost configuration $\tilde{c}$ to derive $r(G,\{a\})$ and $r(G,\{b\})$, it must hold true that $r(G,B) = \sup\{r(G,\{\beta\}) \mid \beta \in B\}$, implying that the price of uncertainty in Alice and Bob's scenario is $1$.

\subsection{Bounding the Price of Uncertainty}

As Alice and Bob's scenario demonstrates, cost configurations designed for an uncertain $\beta$ are not necessarily less efficient than those designed for a specific~$\beta$.
Therefore one might wonder whether scenarios exist in which a real loss of efficiency is bound to occur, i.e., can the price of uncertainty be greater than $1$?
The following proposition shows that such scenarios indeed exist.

\begin{proposition}\label{prop:lpu}
	There exists a family of task graphs and present bias sets for which the price of uncertainty converges to $1.1$.
\end{proposition}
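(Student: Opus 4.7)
The plan is to exhibit an explicit family of task graphs $G_k$ together with a fixed present bias set $B=\{\beta_1,\beta_2\}$ (two values with $\beta_1<\beta_2$) for which one can compute $r(G_k,\{\beta_i\})$ exactly and show that the optimum $r(G_k,B)$ of any simultaneously motivating configuration approaches $\tfrac{11}{10}\max_i r(G_k,\{\beta_i\})$ as $k\to\infty$. The value $\tfrac{11}{10}$ suggests that the gadget parameters will be arithmetic (e.g.\ costs of the form $1$, $9$, $11$, with $\beta_1,\beta_2$ near a particular threshold), chosen so that the two inequalities that enforce motivation for $\beta_1$ and $\beta_2$ individually can each be made tight by a tailored $\tilde c$, but cannot be simultaneously tight for any common $\tilde c$.

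First, I would design a small \emph{single} gadget: a graph fragment with two outgoing branches from a node $v$, whose local cost ratios are tuned so that a $\beta_2$-agent is ready to take the ``safe'' branch while a $\beta_1$-agent strictly prefers a ``tempting'' branch leading to an expensive downstream edge. The crux is to pick the branch costs so that the cheapest way to redirect the $\beta_1$-agent to the safe branch is to raise extra cost $\tilde c$ on an edge \emph{that also lies on the realized path of the $\beta_2$-agent}, making her perceived cost increase. I would then compute, as a function of the gadget parameters, the minimum reward required (i) for each $\beta_i$ in isolation and (ii) for both simultaneously, and verify that the worst case of the two single-bias rewards equals some $r^\star$ while the joint reward is $\tfrac{11}{10}r^\star+o(1)$ as parameters are scaled.

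Second, if a single gadget only yields a ratio strictly less than $\tfrac{11}{10}$, I would either (a) concatenate $k$ copies of the gadget in series, reusing the fact that the bottleneck perceived-cost inequalities compose additively along a path, and let $k\to\infty$, or (b) replace the finite gadget by a parameterized one in which a parameter $\varepsilon\to 0$ simultaneously pushes $\beta_1,\beta_2$ toward a critical threshold at which the gap between the two optima widens to $\tfrac{11}{10}$. Along the way I would use the characterization from Proposition~\ref{prop:mccunp}: it suffices to ensure the joint bound on the finite witness set $B'$, so the lower bound reduces to a two-point comparison and the calculations stay finite.

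The main obstacle is the lower bound on $r(G_k,B)$: one must rule out \emph{every} cost configuration, including those that reroute the $\beta_1$-agent along an entirely different path through the graph. I would handle this by a short case analysis over all $s$-$t$ paths in the gadget: for each path $P$ that a $\beta_1$-agent can follow under some $\tilde c$, I would show that making $P$ both constructible by $\beta_1$ and motivating for $\beta_2$ already forces the bottleneck perceived cost along the $\beta_2$-path to be at least $\tfrac{11}{10}\beta_2 r^\star$. Finally, I would match the lower bound with an explicit $\tilde c$ achieving $\tfrac{11}{10}r^\star+o(1)$, yielding the desired convergence.
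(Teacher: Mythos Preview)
Your proposal sketches a plausible high-level strategy but misidentifies the mechanism that actually creates the gap, and without the right mechanism you will not be able to pin down the value $11/10$.

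You write that the crux is to force extra cost onto an edge ``that also lies on the realized path of the $\beta_2$-agent, making her perceived cost increase.'' In the paper's construction this is precisely \emph{not} what happens: the extra cost is placed on a \emph{shortcut} that neither agent traverses in any motivating configuration. The tension comes from the distinction between an agent's \emph{planned} continuation and her \emph{realized} walk. Concretely, the paper takes $B=\{a,1/2\}$ with $a\to 0$ and a graph consisting of a long main path $s=v_0,\dots,v_{12+4/a}=t$ (first edge cost $2$, the rest cost $1$) plus a two-edge shortcut $v_1\to w\to t$ of costs $4$ and $6+3/a$. The $a$-agent at $s$ uses the shortcut only as a cheap-looking plan to justify the first step---at $v_1$ she abandons it for the main path. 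The $1/2$-agent, by contrast, would actually walk onto the shortcut at $v_1$ and then be stuck at $w$. Hence any simultaneously motivating $\tilde c$ must load the shortcut with extra cost greater than $1/(2a)-1$ to deter the $1/2$-agent; but that same extra cost ruins the $a$-agent's ``carrot'' at $s$, pushing her minimum perceived cost there above $9a+11/2$ and hence the required reward above $9+11/(2a)$. Since each individual bias can be handled with reward $10+5/a$, the ratio tends to $11/10$.

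Two further points. First, your appeal to Proposition~\ref{prop:mccunp} is unnecessary: the present bias set here is already finite (two points), so there is nothing to reduce. Second, your lower-bound plan---a case analysis over all $s$--$t$ paths a $\beta_1$-agent might follow---is heavier than needed. The actual argument is two lines: show the $1/2$-agent must avoid $w$ (one inequality at $w$), deduce a lower bound on the extra cost along the shortcut, then evaluate the $a$-agent's perceived cost at $s$ against both of her only two plans. No concatenation or serial amplification is used; the family is parameterized directly by $a\to 0$. Until you commit to a concrete graph and carry out these two calculations, the proposal is a template rather than a proof.
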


\begin{proof}
	Let $0 < a \leq 3/8$ be some present bias such that $4/a$ is integral and consider the task $G$ consisting of a directed path $v_0,v_1,\ldots,v_{12 + 4/a}$.
	We call this path the {\em main path} and charge a cost of $2$ on its first edge while all other edges of the path have a cost of $1$.
	In addition to the main path, we introduce a {\em shortcut} from $v_1$ to $v_{12 + 4/a}$ along an intermediate node $w$.
	The cost of the two edges $(v_1,w)$ and $(w,v_{12 + 4/a})$ is $4$ and $6 + 3/a$ respectively.
	For a convenience let $s = v_0$ and $t = v_{12 + 4/a}$.
	Figure~\ref{fig:lpu} shows a sketch of $G$.
	
	Assume that $G$ is traversed by an agent whose present bias is either $a$ or $b=1/2$, but we do not know which, i.e., $B = \{a, 1/2\}$.
	Our goal is to construct two cost configurations $\tilde{a}$ and $\tilde{b}$ such that $G_{\tilde{a}}$ and $G_{\tilde{b}}$ are motivating for a reward of $10 + 5/a$ and the respective present bias.
	This implies $\sup\{r(G,\{a\}),r(G,\{1/2\})\} \leq 10 + 5/a$.
	We then argue that a reward less than $9 + 11/(2a)$ is not sufficient to motivate both types of agents simultaneously, i.e., $r(G,B) \geq 9 + 11/(2a)$.
	As $(9 + 11/(2a))/(10 + 5/a)$ converges to $11/10$ for $a \to 0$, this establishes the proposition.
	
	We begin with~$\tilde{a}$.
	For this purpose let $\tilde{a}$ assign no extra cost at all, i.e., $\tilde{a}(e) = 0$ for all edges~$e$.
	Furthermore, assume the agent has a present bias of $a$ and a reward of $10 + 5/a$ is placed at $t$.
	When located at $s$, the agent's only choice is $(s,v_1)$.
	If she plans to take the shortcut afterwards, her perceived cost of $(s,v_1)$ is at most $d_{a,\tilde{a}}(s,v_1) \leq 10a + 5$.
	As this matches her perceived reward, she proceeds to $v_1$ where she faces two options:
	The first one is to take the shortcut for a perceived cost of $d_{a,\tilde{a}}(v_1,w) = 6a + 7$.
	Considering that we chose $a$ to satisfy $a < 1/2$, it follows that $d_{a,\tilde{a}}(v_1,w) > 10a + 5$ and therefore the shortcut is not motivating.
	The second option is to take the main path along $11 + 4/a$ edges of cost $1$, resulting in a perceived cost of $d_{a,\tilde{a}}(v_1,v_2) = 10a + 5$.
	Similar to the situation at $s$, this cost matches the perceived reward and she proceeds to $v_2$.
	Since all remaining edges $(v_{i},v_{i+1})$ have a perceived cost less than $d_{a,\tilde{a}}(v_1,v_2)$, the agent eventually reaches $t$ and we conclude that $G_{\tilde{a}}$ is motivating.
	
	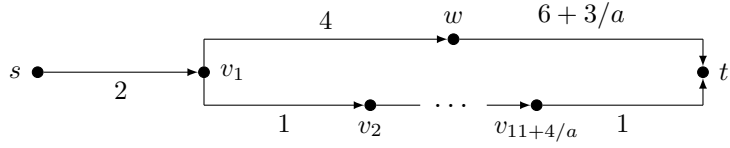
\begin{figure}[t]
		\center
		\begin{tikzpicture}[scale=0.875, nst/.style={draw,circle,fill=black,minimum size=4pt,inner sep=0pt]}, est/.style={draw,>=latex,->}]
			
			\node[nst] (v1) at (0,0.5) [label=left:\small{$s$}] {};
			\node[nst] (v2) at (2.5,0.5) [label=right:\small{$v_1$}] {};
			\node[nst] (v3) at (5,0) [label=below:\small{$v_2$}] {};
			\node[nst] (v4) at (7.5,0) [label=below:\small{$v_{11 + 4/a}$}] {};
			\node[nst] (v5) at (10,0.5) [label=right:\small{$t$}] {};
			\node[nst] (v6) at (6.25,1) [label=above:\small{$w$}] {};
			
			\node at (6.25,0) {$\dots$};			
			
			\path (v1) edge[est] node [below] {\small{$2$}} (v2)
			(v2) edge (2.5,0)
			(2.5,0) edge[est] node [below] {\small{$1$}} (v3)
			(v3) edge (5.75,0)
			(6.75,0) edge[est] (v4)
			(v4) edge node [below] {\small{$1$}} (10,0)
			(10,0) edge[est] (v5)
			(v2) edge (2.5,1)
			(2.5,1) edge[est] node [above] {\small{$4$}} (v6)
			(v6) edge node [above] {\small{$6 + 3/a$}} (10,1)
			(10,1) edge[est] (v5);
			\end{tikzpicture}
		\caption{Task graph with a price of uncertainty of $(9 + 11/(2a))/(10 + 5/a)$}\label{fig:lpu}
	\end{figure}	
	
	We continue to construct $\tilde{b}$ by setting $\tilde{b}(v_1,w) = 1/(2a)$ and $\tilde{b}(e) = 0$ otherwise.
	In contrast to the previous scenario assume the agent has a present bias of $b$.
	The reward is still $10 + 5/a$, but its perceived value has changed to $5 + 5/(2a)$.
	When located at the initial node $s$, the agent's perceived cost is at most $d_{b,\tilde{b}}(s,v_1) \leq 7 + 7/(4a)$ if she plans to take the shortcut afterwards.
	By choice of $a$ it holds true that $a < 3/8$ and we get $d_{b,\tilde{b}}(s,v_1) \leq 5 + 5/(2a)$.
	Consequently, the agent is motivated to proceed to $v_1$.
	At this point, she has to choose between the shortcut and the main path.
	Her perceived cost of the former is $d_{b,\tilde{b}}(v_1,w) = 7 + 2/a$, whereas the latter has a perceived cost of  $d_{b,\tilde{b}}(v_1,v_2) = 6 + 2/a$.
	Clearly, the main path is her preferred choice and since $a < 1/2$, it is also a motivating one.
	Because all remaining edges $(v_i,v_{i+1})$ have a perceived cost less than $d_{b,\tilde{a}}(v_1,v_2)$, it follows that $G_{\tilde{b}}$ is motivating.
	
	It remains to show that no cost configuration can be motivating for both types of agents at the same time if the reward is less than $9 + 11/(2a)$.
	For the sake of contradiction assume such a cost configuration $\tilde{c}$ exists.
	Note that an agent with present bias $b$ must not enter the shortcut as her perceived cost $d_{b,\tilde{c}}(w,t) = 6 + 3/a$ exceeds her perceived reward of $9/2 + 11/(4a)$ for any $a > 0$.
	However, if we do not assign extra cost to $G$, such an agent prefers the shortcut to the main path when located at $v_1$.
	The difference in perceived cost is $d_{b}(v_1,v_2) - d_{b}(v_1,w) = 1/(2a) - 1$.
	Consequently, $\tilde{c}$ must assign extra cost greater than $1/(2a) - 1$ to the shortcut.
	Next consider an agent with a present bias of $a$ located at $s$.
	At this point her perceived cost for taking the shortcut is greater than $9a + 11/2$, due to the extra cost assigned by $\tilde{c}$.
	Note that this cost exceeds her perceived reward.
	Her other option is to plan along the main path.
	In this case her perceived cost is $11a + 6$.
	Clearly, this is even more expensive contradicting the assumption that $\tilde{c}$ is motivating for both types of agents.
\end{proof}

As the price of uncertainty can be strictly greater than $1$, the question for an upper bound arises.
Ideally, we would like to design a cost configuration $\tilde{c}$ motivating for all $\beta \in B$ assuming the reward is set to $\varrho r(G,\{b\})$ for some constant factor $\varrho > 1$ and $b = \min B$.
Clearly, the existence of such a $\tilde{c}$ would imply a constant bound of $\varrho$ for the price of uncertainty independent of $G$ and $B$.
Using a generalized version of the approximation algorithm we proposed in~\cite{AK2}, it is indeed possible to construct a $\tilde{c}$ with the desired property for $\varrho = 2$.

\begin{algorithm}[t]
	\caption{\sc UncertainPresentBiasApprox} \label{alg:uaprox}
	\SetKw{KwSuch}{such that}
	\SetKw{KwAnd}{and}
    $b \gets \min B$; $P \gets$ minmax path from $s$ to $t$ w.r.t $d_{b}(e)$; $\alpha \gets \max\{d_{b}(e) \mid e \in P\}$\;
    \lForEach{$v \in V\setminus\{t\}$}{$\varsigma(v) \gets \text{successor of } v \text{ on a cheapest path from } v \text{ to } t$}
    $T = \{(v,\varsigma(v)) \mid v \in V \setminus \{t\}\}$\;
    \lForEach{$e \in E$}{$\tilde{c}(e) \gets 0$}
	\lForEach{$e \in E \setminus (P \cup T)$}{$\tilde{c}(e) \gets 2\alpha/b + 1$}\label{ln:c1}
	\ForEach{$(v,w) \in T$ \KwSuch $v \in P$ \KwAnd $w \notin P$}{
		$P' \gets v,\varsigma(v),\varsigma(\varsigma(v)),\ldots,t$\;
		$u \gets \text{first node of } P' \text{ different from } v \text{ that is also a node of } P$\;
		$\tilde{c}(v,w) \gets \text{cost of most expensive edge of } P' \text{ between } v \text{ and } u$\;\label{ln:c2}
	}
	\Return $\tilde{c}$\;
\end{algorithm}

The main idea of {\sc UncertainPresentBiasApprox} is simple:
First, the algorithm computes a value $\alpha$ such that $\alpha/b$ is a lower bound on the reward necessary for agents with present bias $b$, i.e., $r(G,\{b\}) \geq \alpha/b$.
In particular, this bound implies $\sup\{r(G,\{\beta\}) \mid \beta \in B\} \geq \alpha/b$.
Next the algorithm constructs a $\tilde{c}$ such that a reward of $2\alpha/b$ is sufficiently motivating for all $\beta \in B$, i.e., $r(G,B) \leq 2\alpha/b$.
As a result the price of uncertainty can be at most $2$.
In the following we try to convey the intuition behind the algorithm in more detail.

We begin with the computation of $\alpha$.
For this purpose let $P$ be a path minimizing the maximum cost an agent with present bias $b$ perceives on her way from $s$ to $t$.
We call $P$ a {\em minmax path} and define $\alpha = \max\{d_{b}(e) \mid e \in P\}$ to be the maximum perceived edge cost of $P$.
Since cost configurations cannot decrease edge cost, it should be clear that $\alpha$ is a valid lower bound on the reward required for the present bias $b$, i.e., $r(G,\{b\}) \geq \alpha/b$.

We proceed with $\tilde{c}$.
The goal is to assign extra cost in such a way that any agent with a present bias $\beta \in B$ traverses only two kinds of edges.
The first kind of edges are those on $P$.
It is instructive to note that each such edge $(v,w) \in P$ is motivating for a reward of $\alpha/b$ if $\beta \geq b$.
The reason is that
\[d_{\beta}(v,w) = \beta\Bigl(\frac{c(v,w)}{\beta} + d(v,w)\Bigr) \leq \beta\Bigl(\frac{c(v,w)}{b} + d(v,w)\Bigr) = \beta\frac{d_{b}(v,w)}{b} = \beta \frac{\alpha}{b}.\]
In particular, $P$ is motivating for each present bias $\beta \in B$.
The second kind of edges are on cheapest paths to $t$.
To identify these edges, the algorithm assigns a distinct successor $\varsigma(v)$ to each node $v \in V \setminus \{t\}$ such that $(v,\varsigma(v))$ is the initial edge of a cheapest path from $v$ to $t$.
Since we assume $t$ to be reachable from all other nodes of $G$ at least one suitable successor must exist.
By definition of $\varsigma$, we know that $P' = v,\varsigma(v),\varsigma(\varsigma(v)),\ldots,t$ is a cheapest path from $v$ to $t$.
We call $P'$ the {\em $\varsigma$-path\/} of $v$ and $T = \{(v,\varsigma(v)) \mid v \in V \setminus \{t\}\}$ a {\em cheapest path tree\/}.

Remember that we try to keep agents on the edges of $P$ and $T$.
For this purpose, we assign an extra cost of $\tilde{c}(e) = 2\alpha/b + 1$ to all other edges.
This raises their perceived cost to at least $2\alpha/b + 1$; a price no agent is willing to pay for a perceived reward of $\beta2\alpha/b$.
However, since we have not assigned any extra cost to $T$ so far, the perceived cost of edges in $P$ and $T$ is unaffected by the current~$\tilde{c}$.
In particular, all edges of $P$ are still motivating for a reward of $\alpha/b$ and any present bias $\beta \in B$.
To keep agents from entering costly $\varsigma$-paths $P' = v,\varsigma(v),\varsigma(\varsigma(v)),\ldots,t$, we assign an extra cost to the out-edges $(v,\varsigma(v))$ of $P$, i.e., $v \in P$ but $\varsigma(v) \notin P$.
The extra cost $\tilde{c}(v,\varsigma(v))$ is chosen to match the cost of a most expensive edge on $P'$ between $v$ and the next intersection of $P'$ and~$P$.
It is easy to see that the resulting $\tilde{c}$ can no more than double the perceived cost of any edge in $P$, see the proof of Theorem~\ref{thm:upu} for a precise argument.
Furthermore, the perceived cost of any out-edge $(v,\varsigma(v))$ of $P$ is either high enough to keep agents on $P$ or they do not encounter edges exceeding the perceived cost of $(v,\varsigma(v))$ until they reenter $P$.
We conclude that a reward of $2\alpha/b$ is sufficiently motivating, leading us to one of the central results of our work.

\begin{theorem}\label{thm:upu}
	The price of uncertainty is at most $2$.
\end{theorem}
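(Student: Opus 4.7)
The plan is to use the cost configuration $\tilde c$ returned by \textsc{UncertainPresentBiasApprox} as an explicit witness. Writing $b = \min B$, I would establish two matching bounds, $\sup_{\beta \in B} r(G,\{\beta\}) \geq \alpha/b$ and $r(G,B) \leq 2\alpha/b$; their ratio gives the theorem.

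The lower bound is immediate from the choice of $\alpha$: since $P$ is a minmax path with respect to $d_b$, every path from $s$ to $t$ in $G$ contains an edge whose perceived cost under $d_b$ is at least $\alpha$. Extra cost can only raise this quantity, so an agent with present bias $b$ who reaches $t$ must face a perceived cost of at least $\alpha$ somewhere, forcing $b \cdot r(G,\{b\}) \geq \alpha$ and hence $\sup_{\beta \in B} r(G,\{\beta\}) \geq \alpha/b$.

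For the upper bound I would fix an arbitrary $\beta \in B$, set the reward to $R = 2\alpha/b$, and prove by induction along the agent's walk in $G_{\tilde c}$ that $d_{\beta,\tilde c}(v) \leq \beta R$ at every node $v$ she reaches. Line~\ref{ln:c1} makes all edges outside $P \cup T$ too expensive to enter, so only edges of $P$ and of the $\varsigma$-tree remain relevant. At a $P$-node $v$ with $P$-successor $w_P$, a short linear manipulation of $d_b(v,w_P) = c(v,w_P) + b\,d(w_P) \leq \alpha$ yields $c(v,w_P) + \beta\,d(w_P) \leq \beta\alpha/b$ for every $\beta \geq b$, while a telescoping bound on $d_{\tilde c}(w_P)$ gives $d_{\tilde c}(w_P) \leq 2\,d(w_P) \leq 2\alpha/b$; combining the two yields $d_{\beta,\tilde c}(v,w_P) \leq 2\beta\alpha/b = \beta R$, so the $P$-edge is always motivating. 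The telescoping observation is that, partitioning the $\varsigma$-path from $w_P$ to $t$ at its $P$-crossings, each maximal segment is charged an extra cost (via line~\ref{ln:c2}) no larger than its own edge-cost sum, so the total extra cost is bounded by $d(w_P)$.

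The main obstacle is the case in which the agent departs from $P$ via some $\varsigma$-edge $(v,\varsigma(v))$ with $\varsigma(v) \notin P$. Here I would exploit the defining property of line~\ref{ln:c2}: the extra cost charged on that edge equals the maximum edge cost encountered along the detour until the next $P$-intersection $u$. Once this cost is paid upfront, every subsequent edge on the detour is at most as expensive as the triggering one, and an inductive comparison shows that the perceived cost at every detour node is bounded by that of the entry edge, which is at most $\beta R$ by the agent's choice. Assembling the three cases completes the induction, yields $r(G,B) \leq 2\alpha/b$, and together with the lower bound proves the price of uncertainty is at most $2$.
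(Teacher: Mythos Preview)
Your proposal is correct and follows essentially the same approach as the paper: both use the output of \textsc{UncertainPresentBiasApprox} as a witness, establish $\alpha/b$ as a lower bound on $\sup_{\beta\in B} r(G,\{\beta\})$, and then prove $2\alpha/b$ suffices by a case split on whether the agent's current node lies on $P$ (handled by the same telescoping bound $d_{\tilde c}(w_P)\le 2d(w_P)$) or off $P$ (handled by comparing to the entry edge of the detour). The only place your sketch is slightly looser than the paper is the off-$P$ case: the paper makes explicit that, in addition to the immediate-cost bound $c(v,\varsigma(v))\le \tilde c(v',w')$, one also needs $d_{\tilde c}(\varsigma(v))\le d_{\tilde c}(w')$, which holds because the detour is a prefix of a cheapest path in $G_{\tilde c}$; your phrase ``inductive comparison'' presumably covers this, but you will want to state it when writing the proof out.
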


\begin{proof}
	Let $\beta$ be an arbitrary present bias from $B$.
	According to our considerations from Section~\ref{sec:upb} it is evident that the theorem holds true if {\sc UncertainPresentBiasApprox} yields a cost configuration $\tilde{c}$ such that a reward of $2\alpha/b$ is motivating for $\beta$.
	In other words, we need to show that no agent with present bias $\beta$ can reach a node $v$ where her perceived cost $d_{\beta,\tilde{c}}(v)$ exceeds her perceived reward of~$\beta2\alpha/b$.
	For this purpose we make a case distinction on $v$.
	
	First, assume that $v$ is a node of $P$ and let $w$ be the direct successor of $v$ on $P$. 
	In our previous discussion of {\sc UncertainPresentBiasApprox} we have already argued that $(v,w)$ is motivating for a reward of $\alpha/b$ if we ignore extra cost, i.e., $d_{\beta}(v,w) \leq \beta\alpha/b$.
	Our goal is to show that $\tilde{c}$ can at most double the perceived cost of $(v,w)$ in the sense that $d_{\beta,\tilde{c}}(v,w) \leq 2 d_{\beta}(v,w)$.
	Clearly, this directly implies the desired bound $d_{\beta,\tilde{c}}(v) \leq d_{\beta,\tilde{c}}(v,w) \leq \beta2\alpha/b$ on the perceived cost of $v$.
	To prove our claim, let $P'$ be the $\varsigma$-path of~$w$.
	Remember {\sc UncertainPresentBiasApprox} does not raise the cost of $(v,w)$.
	Moreover, the algorithm only assigns extra cost to edges $(w',\varsigma(w'))$ of $P'$ that are out-edges of $P$, i.e., $w' \in P$ but $\varsigma(w') \notin P$.
	Consequently, at most one edge of $P'$ can charge extra cost between any two consecutive intersections of $P$ and $P'$.
	Because this extra cost is equal to the cost of an edge between the two intersections, each edge of $P'$ can contribute at most once to the total extra cost of $P'$.
	As a result, we know that the extra cost of $P'$ is less or equal to its original cost.
	Together with the fact $P'$ is a cheapest path from $w$ to $t$ we conclude that
	
	\begin{align*}
		d_{\beta,\tilde{c}}(v,w) 	&= c(v,w) + \beta d_{\tilde{c}}(w) \leq c(v,w) + \beta \sum_{e \in P'}\bigl(c(e) + \tilde{c}(e)\bigr)\\
								&\leq c(v,w) + \beta \sum_{e \in P'} 2c(e) = c(v,w) + 2\beta d(w) \leq 2 d_{\beta}(v,w).
	\end{align*}
	
	Next assume that $v$ is not on $P$ and let $v'$ be the last node of $P$ the agent visited before reaching~$v$.
	Furthermore, let $P_v = v,w,\ldots,t$ and $P_{v'} = v',w',\ldots,t$ be the two paths the agent plans to take when located at $v$ and $v'$ respectively.
	Because the agent is willing to traverse $(v',w')$, we know $d_{\beta,\tilde{c}}(v',w') \leq \beta2\alpha/b$.
	To obtain the desired bound $d_{\beta,\tilde{c}}(v) \leq d_{\beta,\tilde{c}}(v,\varsigma(v)) \leq \beta2\alpha/b$, all that remains to be shown is $d_{\beta,\tilde{c}}(v,w) \leq d_{\beta,\tilde{c}}(v',w')$.
	At this point it is helpful to recall that the agent never plans to take an edge that is neither part of $P$ or $T$.
	The reason is that those edges have an extra cost of $2\alpha/b + 1$, which certainly exceeds the perceived reward.
	As a result, we know that the first edges of $P_{v'}$, i.e., the edges from $v'$ to $w$, are all edges of $T$.
	Let $Q$, be the subpath of $P_{v'}$ that goes from $w'$ to~$w$.
	By definition of $P_{v'}$ we know that $Q$ is part of a cheapest path from $w'$ to $t$ with respect to $\tilde{c}$ and it becomes apparent that
	\[d_{\tilde{c}}(w) \leq \sum_{e \in Q}\bigl(c(e) + \tilde{c}(e)\bigr) + d_{\tilde{c}}(w) = d_{\tilde{c}}(w').\]
	Moreover, by design of $\tilde{c}$ we know that $c(v,w) \leq \tilde{c}(v',w')$ and $\tilde{c}(v,w) = 0$.
	Combining these inequalities immediately yields
	\[d_{\beta,\tilde{c}}(v,w) = c(v,w) + \beta d_{\tilde{c}}(w) \leq c(v',w') + \tilde{c}(v',w') + \beta d_{\tilde{c}}(w') = d_{\beta,\tilde{c}}(v',w').\]
	This completes the proof. 
\end{proof}

It is interesting to note that {\sc UncertainPresentBiasApprox} can be executed in polynomial time.
Furthermore, the proof of Theorem~\ref{thm:upu} argues that $\alpha/b \leq r(G,B) \leq 2\alpha/b$.
As a result we have also found an efficient constant factor approximation of UPB-OPT.

\begin{corollary}\label{coll:uaprox}
	UPB-OPT admits a polynomial time $2$-approximation.
\end{corollary}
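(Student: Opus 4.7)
The plan is to package the two ingredients already produced in the course of proving Theorem~\ref{thm:upu}: a lower bound on $r(G,B)$ and an explicit motivating cost configuration, and then check that the algorithm that outputs the latter runs in polynomial time. The approximation ratio then falls out by a single division.

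First I would establish the lower bound $r(G,B) \geq \alpha/b$, where $b = \min B$ and $\alpha$ is the value computed in line~1 of {\sc UncertainPresentBiasApprox}. Any agent with present bias $b$ must traverse some $s$-$t$ path; on every such path the maximum perceived edge cost is, by definition of a minmax path, at least $\alpha$. Since extra cost can only increase perceived edge costs, this remains true under any cost configuration. Consequently the perceived reward $b r$ must cover at least $\alpha$, giving $r(G,\{b\}) \geq \alpha/b$, and monotonicity of $r(G,\cdot)$ in the present bias set yields $r(G,B) \geq r(G,\{b\}) \geq \alpha/b$.

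Next I would invoke Theorem~\ref{thm:upu}, or rather its proof, which shows that the cost configuration $\tilde{c}$ returned by {\sc UncertainPresentBiasApprox} is motivating for every $\beta \in B$ as soon as the reward is set to $2\alpha/b$. Combining the two inequalities,
\[\frac{2\alpha/b}{r(G,B)} \leq \frac{2\alpha/b}{\alpha/b} = 2,\]
so the reward certified by the algorithm is within a factor of $2$ of the optimum.

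The only remaining obstacle is to check that {\sc UncertainPresentBiasApprox} itself runs in polynomial time; this is the step most likely to hide subtleties, but all of its pieces are standard. Computing $b = \min B$ is possible by assumption on the structure of $B$; the cheapest $s$-$t$ distances $d(v)$ and a cheapest-path successor $\varsigma$ are obtained by a single pass over $G$ in reverse topological order, since $G$ is a DAG; the minmax path $P$ with respect to $d_b(e)$ can be found in polynomial time, e.g.\ by sorting the values $d_b(e)$ and binary-searching for the smallest threshold $\alpha$ such that the subgraph of edges with $d_b(e) \leq \alpha$ still contains an $s$-$t$ path, then extracting such a path. The remaining two loops over $E$ and $T$ cost $O(|E|)$ arithmetic operations, and each individual assignment of $\tilde{c}(v,w)$ requires only a walk along a single $\varsigma$-path until it reenters $P$, which is linear in $n$. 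This places the total running time comfortably in polynomial time and completes the proof.
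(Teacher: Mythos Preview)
Your proposal is correct and follows exactly the route the paper takes: the corollary is presented there as an immediate consequence of the two facts $\alpha/b \le r(G,B)$ and $r(G,B) \le 2\alpha/b$ established in and around the proof of Theorem~\ref{thm:upu}, together with the observation that {\sc UncertainPresentBiasApprox} runs in polynomial time. You simply spell out the polynomial-time check in more detail than the paper does, which is fine.
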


\section{Variable Present Bias}
\label{sec:vpb}

So far we have considered agents with an unknown but fixed present bias.
We now generalize this model to agents whose $\beta$ may vary arbitrarily within $B$ as they progress through $G$.
It is convenient to think of $\beta$ as a {\em present bias configuration}, i.e., an assignment of present bias values $\beta(v) \in B$ to the nodes $v$ of~$G$.
Whenever the agent reaches a node $v$, she acts according to the current present bias value $\beta(v)$.
We say that $G$ is motivating with respect to a present bias configuration $\beta$ if and only if the agent does not quit on a walk from $s$ to $t$.

To illustrate the consequences of a variable present bias we revisit Alice and Bob's scenario once more.
Recall that the agent in this scenario is either like Alice with a present bias of $a = 1/2 - \varepsilon$ or like Bob with a present bias of $b = 1/2 + \varepsilon$, i.e., $B = \{a,b\}$.
But while she had to commit to one present bias before, she is now free to change between $a$ and $b$.
For instance, her present bias could be $b$ at $s$ and $v_B$, but $a$ otherwise, i.e., $\beta(v) = b$ for $v \in \{s,v_B\}$ and $\beta(v) = a$ for $v \in V \setminus \{s,v_B\}$.
In this case she walks along the same path Bob would take, i.e., $s,v_B,v_{AB},t$.
However, there is a subtle difference.
At $v_{AB}$ the agent behaves like Alice and needs strictly more reward than Bob to remain motivated while traversing $(v_{AB},t)$.
Under closer examination, which we will not go into detail here, it is in fact easy to see that the variability of $\beta$ makes our agent more expensive to motivate than any agent with a fixed present bias from~$B$.

\subsection{Computational Consideration}

Let $G$ be an arbitrary task graph and $B$ a suitable present bias set.
We want to construct a cost configuration $\tilde{c}$ that is motivating for all present bias configuration $\beta \in B^V$, but requires as little reward as possible.
Using arguments similar to those of Section~\ref{sec:upb}, the computational challenges of this task are readily apparent.
In particular, the corresponding decision problem VARIABLE PRESENT BIAS (VPB) is equivalent to MCC whenever $B$ only contains a single element.

\begin{definition}[VPB]\label{def:mccv}
	Given a task graph $G$, present bias set $B$ and reward $r > 0$, decide whether a cost configuration $\tilde{c}$ motivating for all $\beta \in B^V$ exists.
\end{definition}

Because MCC is NP-complete~\cite{AK2}, it immediately follows that VPB is NP-hard.
However, proving that ${\rm UPB} \in {\rm NP}$ requires a more careful argument.

\begin{proposition}\label{prop:mccvnp}
  	VPB is contained in NP.
\end{proposition}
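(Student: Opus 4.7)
To prove VPB is in NP, the plan is to use the cost configuration $\tilde c$ itself as the certificate and provide a polynomial-time verifier. The verifier must decide whether $\tilde c$ is motivating for every present bias configuration $\beta \in B^V$, a set that is generally infinite, so the central task is to reduce this to polynomially many local checks. The driving observation is that the bias values at different nodes are chosen independently by the adversary, so the agent's behavior at each node is determined purely by the bias assigned there, and we can decouple the check into a reachability analysis plus a per-node quitting test.

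The first step is to compute the set $R$ of nodes that some agent could visit under some present bias configuration. We obtain $R$ by breadth-first search from $s$: a successor $w$ of $v \in R$ is added to $R$ exactly if $(v,w) \in p_{\beta,\tilde c}(v)$ for some $\beta \in B$. This condition can be verified in polynomial time by re-using the argument in the proof of Proposition~\ref{prop:mccunp}: the set $B_{v,w}$ of present bias values placing $(v,w)$ in the preference profile is a closed subinterval of $[0,1]$, and since $B$ is a union of constantly many closed intervals, the intersection $B_{v,w} \cap B$ can be tested for non-emptiness efficiently.

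The second step is the per-node quitting check. For each $v \in R$ we verify that $d_{\beta,\tilde c}(v) \leq \beta r$ for all $\beta \in B$. The key observation here is that the residual $d_{\beta,\tilde c}(v) - \beta r = \min_{(v,w) \in E}\bigl(c(v,w) + \tilde c(v,w) + \beta(d_{\tilde c}(w) - r)\bigr)$ is the minimum of finitely many linear functions in $\beta$, hence a concave piecewise linear function. Its maximum over the union of constantly many closed intervals that make up $B$ is attained at one of polynomially many candidate points (the breakpoints of the piecewise linear function and the endpoints of the intervals in $B$), so this test is polynomial time.

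The last step is arguing soundness and completeness of the verifier. If every check passes, then any walk arising from any $\beta \in B^V$ stays inside $R$ and, by the per-node test, never triggers a quit, so $\tilde c$ is motivating. Conversely, if the test fails at some $v \in R$ with offending bias $\beta^*$, then by construction there exists a walk $s = v_0, \ldots, v_k = v$ with an associated bias $\beta_i \in B_{v_i,v_{i+1}} \cap B$ on each step; setting $\beta(v_i) = \beta_i$ for $i < k$ and $\beta(v) = \beta^*$ yields a configuration under which the agent actually reaches $v$ (BFS guarantees that each $v_i$ was processed earlier and thus passed the no-quit test) and then quits. The subtle point, and the part I would expect to need the most care, is precisely this interplay between reachability and quitting: we must make sure that the two decoupled checks really do reassemble into a single offending configuration in $B^V$, which in turn relies on the independence of bias values across nodes and on BFS processing parents before children.
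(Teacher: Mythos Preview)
Your approach matches the paper's: compute the set of nodes reachable via preference-profile edges (your $R$, the paper's $V'$), then verify a no-quit condition at each such node. The one substantive difference is in the per-node check. You analyze the concave piecewise-linear function $\beta \mapsto d_{\beta,\tilde c}(v) - \beta r$ and test its breakpoints together with the endpoints of the intervals of $B$; the paper instead observes that
\[
\frac{d_{\beta,\tilde c}(v)}{\beta} \;=\; \min_{(v,w)\in E}\Bigl\{\frac{c(v,w)+\tilde c(v,w)}{\beta} + d_{\tilde c}(w)\Bigr\}
\]
is monotone nonincreasing in $\beta$, so it suffices to test the single value $\beta=\min B$. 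Your argument is correct but the paper's is simpler and avoids enumerating breakpoints.

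One small correction to your completeness argument: the claim that ``BFS guarantees that each $v_i$ was processed earlier and thus passed the no-quit test'' is not true in general, since BFS is level-order rather than topological and an ancestor of $v$ can lie at a greater BFS depth than $v$. The fix is to choose $v$ as a topologically minimal node of $R$ at which the no-quit test fails; then every $v_i$ on a path from $s$ to $v$ is a strict ancestor of $v$, hence passes the test for all $\beta\in B$, and in particular for the $\beta_i$ you assign. With this adjustment your reassembly of the decoupled checks into a single offending configuration in $B^V$ goes through.
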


\begin{proof}
	Let $B$ an arbitrary present bias set given as input and $G$ the task graph resulting from some cost configuration.
	To prove that VPB is contained in NP we need to find an efficient way of confirming whether $G$ is motivating for all $\beta \in B^V$.
	Our strategy for this task consists of three steps:
	(a)~Construct the set $E'$ containing all edges $(v,w)$ minimizing the agent's perceived cost at $v$ for some present bias $\beta(v) \in B$.
	(b) Compute the set $V'$ containing all nodes reachable from $s$ via edges of $E'$.
	(c) For each node $v \in V'$ check that the agent remains motivated for all possible present bias values $\beta(v) \in B$.
	To complete the proof we argue that each step can be completed in polynomial time and that (c) holds true if and only if $G$ is motivating for all $\beta \in B^V$.
	
	We begin with step (a).
	Observe that $E'$ only contains edges $(v,w)$ for which a $\beta(v) \in B$ exists such that $d_{\beta(v)}(v,w) \leq d_{\beta(v)}(v,w')$ for all $(v,w') \in E$.
	From the proof of Proposition~\ref{prop:mccunp} we already know that the present bias parameters $\beta' \in [0,1]$ satisfying $d_{\beta'}(v,w) \leq d_{\beta'}(v,w')$ form a closed subinterval $B_{v,w} \subseteq [0,1]$.
	Moreover, the end points of this interval can be computed in polynomial time.
	Together with our assumptions on $B$, we may assume that $B_{v,w} \cap B \neq \emptyset$ can be determined efficiently and therefore $E'$ can be constructed in polynomial time.
	
	We continue with (b).
	Given the set $E'$, it is trivial to construct $V'$ in polynomial time.
	Furthermore, $V'$ contains exactly those nodes an agent with variable present bias can reach if the reward is sufficiently large.
	To see this, let $P$ be a path from $s$ to $v$ only using edges of~$E'$.
	If we go through the edges $(v',w')$ of $P$ and choose $\beta(v')$ as an element of the non-empty intersection $B_{v',w'} \cap B$, we obtain a valid present bias configuration that may lead the agent onto $v$ for a sufficiently large reward.
	Conversely, we can find no such present bias configuration for nodes $v \notin V'$.
	The reason is that all path $P$ to $v$ must contain at least one edge $(v,w) \notin E'$.
	But by definition of $E'$ the agent does not traverse $(v,w)$ for any $\beta(v) \in B$.
	
	We conclude with (c).
	As a result of (b), we know that the agent can never reach nodes outside of~$V'$.
	However, she can reach each $v \in V'$, unless she quits at some other node of $V'$ before she gets to~$v$.
	As a result, we know that $G$ is motivating for all $\beta \in B^V$ if and only if the agent never quits when located at any of the nodes $v \in V'$.
	Conveniently, we do not need to check the latter condition for all $\beta(v) \in B$.
	Instead, we only need to check for $\beta(v) = b$, with $b$ denoting the minimum of~$B$.
	The reason is that the minimum reward that is motivating in the case of $\beta(v) = b$ is greater or equal to the reward required by any other present bias $\beta(v) \geq b$, as the following inequality demonstrates
	\begin{align*}
		\frac{d_{b}(v)}{b}	&= \min\Bigl\{\frac{c(v,w)}{b} + d(w) \Bigm| (v,w) \in E \Bigr\}\\
							&\geq \min\Bigl\{\frac{c(v,w)}{\beta(v)} + d(w) \Bigm| (v,w) \in E \Bigr\} = \frac{d_{\beta(v)}(v)}{\beta(v)}.
	\end{align*}
	Therefore (c) can be decided in polynomial time.
	This completes the proof.
\end{proof}

As a result of Proposition~\ref{prop:mccvnp}, we may conclude that VPB is NP-complete.

\begin{corollary}
	VPB is NP-complete.
\end{corollary}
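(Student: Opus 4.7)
The proposal is to assemble the corollary directly from two facts that are already in place in the paper. NP-hardness was argued in the sentence immediately following Definition~\ref{def:mccv}: when $B = \{\beta\}$ is a singleton, every present bias configuration in $B^V$ collapses to the constant map, so VPB restricted to such instances coincides with MCC, which is NP-complete by the result of~\cite{AK2}. This yields a trivial polynomial-time reduction from MCC to VPB, hence NP-hardness.

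For containment in NP, the plan is to quote Proposition~\ref{prop:mccvnp} verbatim. Its three-step procedure (build the edge set $E'$ of locally optimal edges for some $\beta(v) \in B$, compute the reachable nodes $V'$ via BFS on $E'$, and for each $v \in V'$ verify motivation at the worst-case present bias $b = \min B$) runs in polynomial time and decides whether a guessed cost configuration is motivating for all $\beta \in B^V$. So a standard NP-verifier guesses a cost configuration of polynomial bit-length (bounded as in~\cite{AK2}) and runs this procedure.

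Combining the two bounds gives the corollary. There is essentially no obstacle here: the only subtlety is that one must confirm that the guessed cost configuration can itself be chosen to have polynomial encoding size, which follows from the same rounding/discretization argument used for MCC in~\cite{AK2} (the relevant costs are determined by linear inequalities among edge costs and the reward, so an optimal or any certifying witness can be taken with polynomially bounded rationals).
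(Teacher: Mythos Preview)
Your proposal is correct and follows exactly the paper's approach: NP-hardness via the trivial reduction from MCC (the singleton-$B$ case), and NP-membership by invoking Proposition~\ref{prop:mccvnp}. The only addition is your explicit remark about the polynomial encoding size of the guessed cost configuration, a detail the paper leaves implicit but which is indeed handled by the same argument as for MCC in~\cite{AK2}.
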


Since it is NP-hard to find optimal cost configurations for general~$B$, we turn to an optimization version of the problem.
Assuming that $r(G,B^V)$ denotes the infimum over all rewards admitting a cost configuration $\tilde{c}$ motivating for all $\beta \in B^V$, we define VPB-OPT as:
 
\begin{definition}[VPB-OPT]
	Given a task graph $G$ and present bias set $B$, determine $r(G,B^V)$.
\end{definition}

Interestingly, approximating VPB-OPT seems to be much harder than UPB-OPT.
The reason why the $2$-approximation for UPB-OPT, i.e., {\sc UncertainPresentBiasApprox}, does not work anymore is simple.
Recall that the cost configuration $\tilde{c}$ returned by the algorithm lets the agent take shortcuts along cheapest paths to $t$.
To ensure that these shortcuts do not become too expensive, $\tilde{c}$ assigns extra cost to their initial edge.
This way the perceived cost within a shortcut should not be greater than that for entering.
As long as the present bias is fixed, this works fine.
However, if the present bias can change, the agent may become more biased within a shortcut and require higher rewards to stay motivated.
One way to fix this problem is to let the assigned extra cost depend on $\tau$, i.e., the range of $B$.
More precisely, we multiply the cost assigned in line~\ref{ln:c2} of Algorithm~\ref{alg:uaprox} by $\tau$ and change line~\ref{ln:c1} to assign a cost of $\tilde{c}(e) = (1+\tau)\alpha/b + 1$.
As a result we obtain a new algorithm {\sc VariablePresentBiasApprox} with an approximation ration of $1 + \tau$.

\begin{theorem}\label{thm:vaprox}
	VPB-OPT admits a polynomial time $(1+\tau)$-approximation.
\end{theorem}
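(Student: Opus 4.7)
The plan is to adapt the analysis of Theorem~\ref{thm:upu} to the variable setting. First I would establish a lower bound $r(G,B^V) \geq \alpha/b$, where $b = \min B$ and $\alpha$ is the maximum perceived edge cost along a minmax path $P$ with respect to $d_b$. This follows because any cost configuration motivating for every $\beta \in B^V$ is motivating for the constant configuration $\beta \equiv b$, so $r(G,B^V) \geq r(G,\{b\}) \geq \alpha/b$ by the argument of Section~\ref{sec:upb}. It therefore suffices to show that the cost configuration $\tilde{c}$ produced by {\sc VariablePresentBiasApprox} makes $G_{\tilde{c}}$ motivating for every $\beta \in B^V$ under reward $(1+\tau)\alpha/b$.

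My first observation would be that edges outside $P \cup T$ carry extra cost at least $(1+\tau)\alpha/b + 1$, which exceeds any perceived reward $\beta(v)(1+\tau)\alpha/b$ no matter what $\beta(v) \in B$ is. Consequently, the agent's walk is confined to $P \cup T$ regardless of her present bias configuration, so it is enough to bound her perceived cost at each visited node. As in Theorem~\ref{thm:upu} I would split into two cases, with the same structural roles played by $P$ and $T$.

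For $v \in P$, let $w$ be its successor on $P$. Since $\tilde{c}(v,w)=0$ and line~\ref{ln:c2} multiplies the extra cost by $\tau$, walking along the $\varsigma$-path $P'$ from $w$ accumulates extra cost at most $\tau$ times the cost of any one edge of $P'$ per segment between consecutive intersections with $P$, hence total extra cost $\leq \tau \cdot d(w)$. This gives $d_{\tilde{c}}(w) \leq (1+\tau)d(w)$, and combined with $\beta(v) \geq b$,
\[
\frac{d_{\beta(v),\tilde{c}}(v,w)}{\beta(v)} \leq \frac{c(v,w)}{b} + (1+\tau)d(w) \leq (1+\tau)\,\frac{d_b(v,w)}{b} \leq (1+\tau)\,\frac{\alpha}{b}.
\]

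The main obstacle is the case $v \notin P$, because the agent's present bias inside a shortcut can differ from her bias at the node where she left $P$. I would trace back to the last $P$-node $v'$ visited, together with the edge $(v',w') \in T$ taken into the shortcut; by the first case she was motivated there. Following the template of Theorem~\ref{thm:upu}, $d_{\tilde{c}}(w) \leq d_{\tilde{c}}(w')$ since the remaining $\varsigma$-suffix from $w$ is a subpath of $P'$ starting at $w'$, and $\tilde{c}(v,w)=0$. The point of the extra $\tau$ factor at $(v',w')$ is precisely to absorb the worst-case ratio $\beta(v)/\beta(v') \leq \tau$: writing $\tilde{c}(v',w') \geq \tau\,c(v,w)$ (since $(v,w)$ lies on the shortcut segment priced in line~\ref{ln:c2}) and using $\beta(v') \leq \tau b \leq \tau \beta(v)$, I would derive
\[
\frac{c(v,w)}{\beta(v)} \leq \frac{\tilde{c}(v',w')}{\tau \beta(v)} \leq \frac{c(v',w')+\tilde{c}(v',w')}{\beta(v')},
\]
so that $d_{\beta(v),\tilde{c}}(v,\varsigma(v))/\beta(v) \leq d_{\beta(v'),\tilde{c}}(v',w')/\beta(v') \leq (1+\tau)\alpha/b$. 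Combined with the obvious polynomial-time bound on {\sc VariablePresentBiasApprox}, this yields the claimed $(1+\tau)$-approximation.
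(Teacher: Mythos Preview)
Your proposal is correct and follows essentially the same approach as the paper's proof: the same lower bound $\alpha/b$, the same two-case analysis for the upper bound, and the same key inequalities $d_{\tilde{c}}(w)\le d_{\tilde{c}}(w')$, $\tilde{c}(v',w')\ge \tau c(v,w)$, and $\beta(v')\le \tau\beta(v)$ combined in the same way. The only cosmetic difference is that you spell out the bound $d_{\tilde{c}}(w)\le(1+\tau)d(w)$ explicitly in Case~1, whereas the paper simply invokes the reasoning of Theorem~\ref{thm:upu} with the extra factor $\tau$.
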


\begin{proof}
	Let $b = \min B$.
	From the analysis of {\sc UncertainPresentBiasApprox} in Section~\ref{sec:upb} it should be clear that $\alpha/b = \max\{d_{b}(e) \mid e \in P\}/b$ is a lower bound for $r(G,B^V)$ and can be computed in polynomial time.
	To establish the theorem, we argue that $(1+\tau)\alpha/b$ is an upper bound for $r(G,B^V)$.
	In particular, we prove that the cost configuration $\tilde{c}$ returned by {\sc VariablePresentBiasApprox} is motivating for all $\beta \in B^V$ if the reward is set to $(1+\tau)\alpha/b$.
	
	Using the same reasoning as in proof of Theorem~\ref{thm:upu}, we know that the agent's perceived cost at any node $v$ of $P$ is covered by a reward of $(1+\tau)\alpha/b$, i.e., $d_{\beta(v),\tilde{c}}(v) = \beta(v)((1+\tau)\alpha/b)$. 
	All that remains to be shown is that this reward is also sufficient if the agent is located on a node $v$ not on~$P$.
	For this purpose, let $v'$ be the last node of $P$ the agent visited before reaching $v$.
	Furthermore, let $(v,w)$ and $(v',w')$ denote the two edges the agent plans to take when located at $v$ and $v'$ respectively.
	According to the same argument we made in the proof of Theorem~\ref{thm:upu}, it holds true that a cheapest path from $w'$ to $t$ with respect to $\tilde{c}$ is more expensive than a cheapest path from $w$ to $t$, i.e., $d_{\tilde{c}}(w) \leq d_{\tilde{c}}(w')$.
	We also know that $\tau c(v,w) \leq \tilde{c}(v',w')$ and $\tilde{c}(v,w) = 0$ hold true by construction of $\tilde{c}$.
	Finally the definition of $\tau = \max B/\min B$ implies that $\beta(v) \geq \beta(v')/\tau$.
	Combining these inequalities yields
	\begin{align*}
		d_{\beta(v),\tilde{c}}(v)	&\leq d_{\beta(v),\tilde{c}}(v,w) = \beta(v)\Bigl(\frac{c(v,w)}{\beta(v)} + d(w)\Bigr) \leq \beta(v)\Bigl(\frac{c(v,w)}{\beta(v')/\tau} + d(w)\Bigr)\\
									&\leq \beta(v)\Bigl(\frac{\tilde{c}(v',w') + c(v'w')}{\beta(v')} + d(w')\Bigr) = \frac{\beta(v)}{\beta(v')} d_{\beta(v'),\tilde{c}}(v',w').
	\end{align*}
	
	Recall that the agent already traversed $(v',w')$.
	Therefore her perceived cost of $(v',w')$ is at most $d_{\beta(v'),\tilde{c}}(v',w') \leq \beta(v')((1+\tau)\alpha/b)$, implying that 
	\[d_{\beta(v),\tilde{c}}(v) \leq \frac{\beta(v)}{\beta(v')} \beta(v')\bigl((1+\tau)\alpha/b\bigr) = \beta(v)\bigl((1+\tau)\alpha/b\bigr).\]
	This completes the proof.
\end{proof}

Although {\sc VariablePresentBiasApprox} yields a good approximation for a moderately variable present bias, it does not provide a constant approximation bound like {\sc UncertainPresentBiasApprox}.
However, this is not necessarily a shortcoming of the algorithm, but a complexity consequence.
To prove hardness for constant factor approximations of VPB-OPT, we consider the VECTOR SCHEDULING (VS) problem.

\begin{definition}[VS-OPT]\label{def:vs}
	Given $m$ machines denoted by the sets $M_1,\ldots,M_m$ and $\ell$ jobs in the form of $d$-dimensional vectors $q_1,\ldots,q_{\ell} \in \mathbb{R}_{\geq 0}^d$, find the smallest makespan over all dimensions, i.e., minimize $\max\{\|\sum_{q \in M_i} q\|_\infty \mid 1 \leq i \leq m\}$ with respect to all partitions of the jobs into sets $M_1,\ldots,M_m$.
\end{definition}

As Chekuri and Khanna have shown, VS-OPT is unlikely to have an efficient constant factor approximation~\cite{CK}.

\begin{theorem}\label{thm:vshard}
	No polynomial time algorithm approximates VS-OPT within a constant factor ${\varrho>1}$, unless  $\rm{NP} = \rm{ZPP}$.
	This holds true even if the problem is restricted to $0$-$1$ vectors.
\end{theorem}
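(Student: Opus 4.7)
The result is attributed to Chekuri and Khanna in the statement itself, so the immediate plan is to cite~\cite{CK} and defer the proof to their paper. Nevertheless, to convey the intuition behind their lower bound, I sketch the underlying reduction below.

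My approach is to reduce from the chromatic number problem, whose approximation hardness under the hypothesis $\rm{NP} \neq \rm{ZPP}$ (via Feige and Kilian) is the natural source of the assumption in the theorem. Given a graph $G=(V,E)$, I would construct a VS-OPT instance with $|V|$ jobs, $m$ machines (parametrising a guess for the chromatic number), and $|E|$ dimensions, one per edge. For each vertex $v$, the associated vector has a $1$ in dimension $e$ if and only if $v$ is an endpoint of $e$, and is $0$ otherwise. Whenever two adjacent vertices $u,v$ are placed on the same machine, the load of that machine in the coordinate of the edge $\{u,v\}$ jumps to at least $2$; consequently a schedule of makespan $1$ exists if and only if $G$ is $m$-colourable. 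This already gives factor-$2$ inapproximability for VS-OPT restricted to $0$-$1$ vectors.

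To upgrade the factor-$2$ separation into an arbitrarily large constant, and thereby rule out every fixed ratio $\varrho > 1$, I would iterate via a product/tensor argument: combining $k \approx \log_2 \varrho$ independent copies of the above reduction produces an instance in which a schedule of makespan $c$ translates into a colouring of $G$ using at most $m \cdot c^{k}$ colours. Plugging this into the $n^{1-\varepsilon}$ chromatic-number gap then excludes any constant-factor polynomial-time approximation for VS-OPT under $\rm{NP} \neq \rm{ZPP}$.

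The main obstacle I anticipate is precisely this amplification step: the bare reduction yields only a factor of~$2$, and a careful gap-preserving product construction is needed to reach an arbitrary constant without blowing up the instance size or introducing non-$0$-$1$ entries. Since Chekuri and Khanna carry out exactly this argument, my plan in the paper is simply to invoke Theorem~\ref{thm:vshard} as a black-box citation to~\cite{CK} and use the sketch above only as informal motivation.
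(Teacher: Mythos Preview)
Your proposal is correct and matches the paper's treatment exactly: the paper does not prove Theorem~\ref{thm:vshard} at all but simply states it as a result of Chekuri and Khanna~\cite{CK} and cites their work. Your additional sketch of the chromatic-number reduction and the gap-amplification step is accurate and goes beyond what the paper provides, but since both you and the paper ultimately defer to~\cite{CK}, the approaches coincide.
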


By reducing VS-OPT to VPB-OPT, we are able to prove that VPB-OPT is unlikely to have an efficient constant factor approximation as well.

\begin{theorem}\label{thm:mccvhard}
	No polynomial time algorithm approximates VPB-OPT within a constant factor $\varrho > 1$, unless $\rm{NP} = \rm{ZPP}$.
\end{theorem}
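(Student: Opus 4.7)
The plan is to give a polynomial-time reduction from VS-OPT restricted to $0$-$1$ vectors (which is already hard to approximate within any constant factor by Theorem~\ref{thm:vshard}) to VPB-OPT, in such a way that a $\varrho$-approximation for VPB-OPT yields an $\mathcal{O}(\varrho)$-approximation for VS-OPT. This then forces ${\rm NP} = {\rm ZPP}$ unless no such polynomial-time approximation for VPB-OPT exists.

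Given a VS instance with machines $M_1,\ldots,M_m$ and $0$-$1$ jobs $q_1,\ldots,q_\ell \in \{0,1\}^d$, I would construct a task graph $G$ as a series concatenation of $\ell$ job gadgets $H_1,\ldots,H_\ell$ from $s$ to $t$, where $H_j$ has entry node $u_j$ and exit node $u_{j+1}$ connected by $m$ parallel branches, one per machine. The cost of the branches is calibrated so that, if the $i$-th branch is used for job $j$, the agent pays an amount that, depending on which $b_k$ the adversary assigns as $\beta(u_j)$, reveals the dimension-$k$ contribution $q_j(k)$ of job $j$ on machine $i$. The present bias set is $B = \{b_1,\ldots,b_d\}$, with the values $b_k$ chosen close enough to keep $\tau$ bounded (so that the constant factors in the reduction do not degrade), but distinct enough that the transformation $c \mapsto c/b_k$ isolates the $k$-th coordinate of the encoded vector. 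The role of the cost configuration $\tilde{c}$ is to raise the cost of $m-1$ of the $m$ branches in each gadget beyond what any $\beta \in B^V$ would tolerate, effectively committing to one machine $M(j)$ per job.

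Under this construction, once $\tilde{c}$ has pinned an assignment $M$, the only reachable path from $s$ to $t$ is the sequence of chosen branches, and the reward requirement is $\max_v\max_{\beta(v)\in B}\, d_{\beta(v),\tilde{c}}(v)/\beta(v)$. With suitable buffer edges inserted between gadgets that smooth the perceived future cost, the relevant maximum is attained at some $u_j$ and, by the calibration of branch costs against $B$, it equals (up to a small universal constant) the dimension-$k$ load $\sum_{j':\,M(j')=i,\,j'\leq j} q_{j'}(k)$ along the assigned machine~$i$. Taking the maximum over $j$ and $k$ recovers the makespan of $M$, and minimizing over $\tilde{c}$ recovers the optimal VS makespan up to a constant factor. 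Any polynomial-time $\varrho$-approximation for VPB-OPT would therefore recover an assignment of makespan at most $\mathcal{O}(\varrho)$ times optimal, contradicting Theorem~\ref{thm:vshard}.

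The hard part of the proof is engineering the gadget so that two things happen simultaneously: the inflation of unused branches stays small enough that it does not itself become the bottleneck in the reward requirement, and the perceived cost at $u_j$ under $\beta(u_j)=b_k$ is exactly (up to the universal constant) the cumulative dimension-$k$ load on the selected machine, undistorted by contributions from earlier or later gadgets. This constrains both the topology of the buffer subpaths and the precise numerical relationship between the edge costs of the machine branches and the values $b_1,\ldots,b_d$. Once a correct calibration is fixed and the reward-versus-makespan correspondence is verified in both directions, the theorem follows by contraposition from Theorem~\ref{thm:vshard}.
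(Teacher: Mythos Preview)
Your plan contains a fatal gap: you propose to keep the range $\tau = \max B/\min B$ bounded by a constant. But the paper's Theorem~\ref{thm:vaprox} already furnishes a polynomial-time $(1+\tau)$-approximation for VPB-OPT, so any family of instances with uniformly bounded $\tau$ admits a constant-factor approximation, and no hardness-of-approximation result can possibly be extracted from such instances. The paper makes exactly this observation at the start of its construction and therefore chooses $B = \{1/2, 1/\ell^2\}$, whose range $\tau = \ell^2/2$ grows with the number of jobs $\ell$.

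A second, related difficulty is your mechanism for isolating the $k$-th coordinate: you want $d$ distinct values $b_1,\ldots,b_d$ that are simultaneously close (to keep $\tau$ small) and far enough apart that $c \mapsto c/b_k$ separates the dimension-$k$ load from the others. Since the perceived cost $d_{\beta}(v,w) = c(v,w) + \beta d(w)$ is an affine function of $\beta$, nearby bias values produce nearby perceived costs, and a single edge cost cannot encode $d$ independent numbers recoverable by $d$ nearby scalings. The paper sidesteps this entirely with a very different architecture: it uses only two bias values and builds one \emph{column} $H_{i,j}$ per (machine, dimension) pair, each column consisting of $\ell$ levels connected by long cheap paths and three kinds of shortcuts. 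The small bias $1/\ell^2$ forces the agent to stay on the long paths (hence to commit a job to a machine), while the bias $1/2$ forces the cost configuration to place extra cost of order $\ell/2$ on certain shortcuts; the accumulation of these extra costs inside a single column $H_{i,j}$ is what encodes the dimension-$j$ load on machine $i$. The two directions of the correspondence then read: makespan $\kappa$ yields a motivating reward of $\kappa\ell + \ell + 1$, and a motivating reward $r$ yields makespan at most $2r/\ell$.
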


\begin{proof}
	Our proof is based on the following reduction from VS-OPT.
	Let $\mathcal{I}$ be an arbitrary instance of VS-OPT with a set of $m$ machines and $\ell \geq 2$ jobs $q_1,\ldots,q_{\ell} \in \{0,1\}^d$.
	Our goal is to construct an instance $\mathcal{J}$ of VPB-OPT with a size polynomial in that of $\mathcal{I}$ such that the following two claims are satisfied:
	(a) If $\mathcal{I}$ has a schedule with a makespan of $\kappa$, then $\mathcal{J}$ has a cost configuration that is motivating for a reward of $r = \kappa\ell + \ell + 1$.
	(b) If $\mathcal{J}$ has a cost configuration that is motivating for a reward of $r$, then $\mathcal{I}$ has a schedule with a makespan of at most $\kappa = 2r/\ell$.
	Consequently, if we have a polynomial time algorithm for VPB-OPT with a constant approximation ratio $\varrho$, we can apply this algorithm to $\mathcal{J}$ and recover a $2\varrho + \mathcal{O}(1)$ approximate solution for $\mathcal{I}$ in polynomial time.
	According to Theorem \ref{thm:vshard} this is not possible unless $\rm{NP} = \rm{ZPP}$.
	
	As our first step we need to construct the VPB-OPT instance $\mathcal{J}$.
	In particular, we need to specify a present bias set $B$ and a task graph $G$.
	We begin with $B$.
	From Theorem~\ref{thm:vaprox} we know that VPB-OPT has a $(1+\tau)$-approximation algorithm.
	Consequently, we should choose $B$ in such a way that its range depends on the size of $\mathcal{I}$.
	Furthermore, Proposition~\ref{prop:cvs} implies that $B$ should not include the value $1$.
	For the sake of simplicity we set $B = \{1/2,1/\ell^2\}$.
	
	\begin{figure}[t]
		\center
		\begin{tikzpicture}[scale=0.875, nst/.style={draw,circle,fill=black,minimum size=4pt,inner sep=0pt]}, est/.style={draw,>=latex,->}]
			
			\node[nst] (v2) at (2.5,0) [label=left:\small{$v_{i,j,1}$}] {};
			\node[nst] (v3) at (5,0) {};
			\node[nst] (v4) at (7.5,0) {};
			\node[nst] (v5) at (10,0) {};
			\node[nst] (v6) at (12.5,0) [label=right:\small{$w_{i,j,1}$}] {};
			\node[nst] (v8) at (2.5,2) [label=left:\small{$v_{i,j,2}$}] {};
			\node[nst] (v9) at (5,2) {};
			\node[nst] (v10) at (7.5,2) {};
			\node[nst] (v11) at (10,2) {};
			\node[nst] (v12) at (12.5,2) [label=right:\small{$w_{i,j,2}$}] {};
			\node[nst] (v14) at (2.5,4) [label=left:\small{$v_{i,j,3}$}] {};
			\node[nst] (v15) at (5,4) {};
			\node[nst] (v16) at (7.5,4) {};
			\node[nst] (v17) at (10,4) {};
			\node[nst] (v18) at (12.5,4) [label=right:\small{$w_{i,j,3}$}] {};
			\node[nst] (v20) at (2.5,6) [label=left:\small{$v_{i,j,\ell}$}] {};
			\node[nst] (v21) at (5,6) {};
			\node[nst] (v22) at (7.5,6) {};
			\node[nst] (v23) at (10,6) {};
			\node[nst] (v24) at (12.5,6) [label=right:\small{$w_{i,j,\ell}$}] {};
			\node[nst] (v25) at (2.5,8) [label=above:\small{$t$}] {};
			\node[nst] (w1) at (2.5,1) {};
			\node[nst] (w2) at (2.5,3) {};
			\node[nst] (w3) at (2.5,7) {};

			\node at (8.75,0) {$\dots$};
			\node at (8.75,1) {$\dots$};		
			\node at (8.75,2) {$\dots$};
			\node at (8.75,3) {$\dots$};
			\node at (8.75,4) {$\dots$};
			\node at (8.75,6) {$\dots$};
			\node at (8.75,7) {$\dots$};
			\node at (2.5,5) {$\vdots$};
			\node at (5,5) {$\vdots$};
			\node at (7.5,5) {$\vdots$};
			\node at (10,5) {$\vdots$};
			\node at (12.5,5) {$\vdots$};
			
			\path (v2) edge[est] node [below] {\small{$1/\ell^2$}} (v3)
			(v3) edge[est] node [below] {\small{$1/\ell^2$}} (v4)
			(v4) edge (8.25,0)
			(9.25,0) edge[est] (v5)
			(v5) edge[est] node [below] {\small{$1/\ell^2$}} (v6)
			(v2) edge[est] node [left] {\small{$0$}} (w1)
			(v3) edge node [left] {\small{$\ell$}} (5,1)
			(v4) edge node [left] {\small{$\ell$}} (7.5,1)
			(v5) edge node [left] {\small{$\ell$}} (10,1)
			(v6) edge node [left] {\small{$\ell$}} (12.5,1)
			(12.5,1) edge (9.25,1)
			(8.25,1) edge (3.5,1)
			(3.5,1) edge[est] (v8)
			(w1) edge[est] node [left] {\small{$1$}} (v8)
			(v8) edge[est] node [below] {\small{$1/\ell^2$}} (v9)
			(v9) edge[est] node [below] {\small{$1/\ell^2$}} (v10)
			(v10) edge (8.25,2)
			(9.25,2) edge[est] (v11)
			(v11) edge[est] node [below] {\small{$1/\ell^2$}} (v12)
			(v8) edge[est] node [left] {\small{$0$}} (w2)
			(v9) edge node [left] {\small{$\ell$}} (5,3)
			(v10) edge node [left] {\small{$\ell$}} (7.5,3)
			(v11) edge node [left] {\small{$\ell$}} (10,3)
			(v12) edge node [left] {\small{$\ell$}} (12.5,3)
			(12.5,3) edge (9.25,3)
			(8.25,3) edge (3.5,3)
			(3.5,3) edge[est] (v14)
			(w2) edge[est] node [left] {\small{$1$}} (v14)
			(v14) edge[est] node [below] {\small{$1/\ell^2$}} (v15)
			(v15) edge[est] node [below] {\small{$1/\ell^2$}} (v16)
			(v16) edge (8.25,4)
			(9.25,4) edge[est] (v17)
			(v17) edge[est] node [below] {\small{$1/\ell^2$}} (v18)
			(v14) edge (2.5,4.5)
			(v15) edge (5,4.5)
			(v16) edge (7.5,4.5)
			(v17) edge (10,4.5)
			(v18) edge (12.5,4.5)
			(2.5,5.5) edge[est] (v20)
			(3,5.5) edge[est] (v20)
			(v20) edge[est] node [below] {\small{$1/\ell^2$}} (v21)
			(v21) edge[est] node [below] {\small{$1/\ell^2$}} (v22)
			(v22) edge (8.25,6)
			(9.25,6) edge[est] (v23)
			(v23) edge[est] node [below] {\small{$1/\ell^2$}} (v24)
			(v20) edge[est] node [left] {\small{$0$}} (w3)
			(v21) edge node [left] {\small{$\ell$}} (5,7)
			(v22) edge node [left] {\small{$\ell$}} (7.5,7)
			(v23) edge node [left] {\small{$\ell$}} (10,7)
			(v24) edge node [left] {\small{$\ell$}} (12.5,7)
			(12.5,7) edge (9.25,7)
			(8.25,7) edge (3.5,7)
			(3.5,7) edge[est] (v25)
			(w3) edge[est] node [left] {\small{$1$}} (v25);
		\end{tikzpicture}
		\caption{A column $H_{i,j}$ of the task graph $G$}\label{fig:col}
	\end{figure}
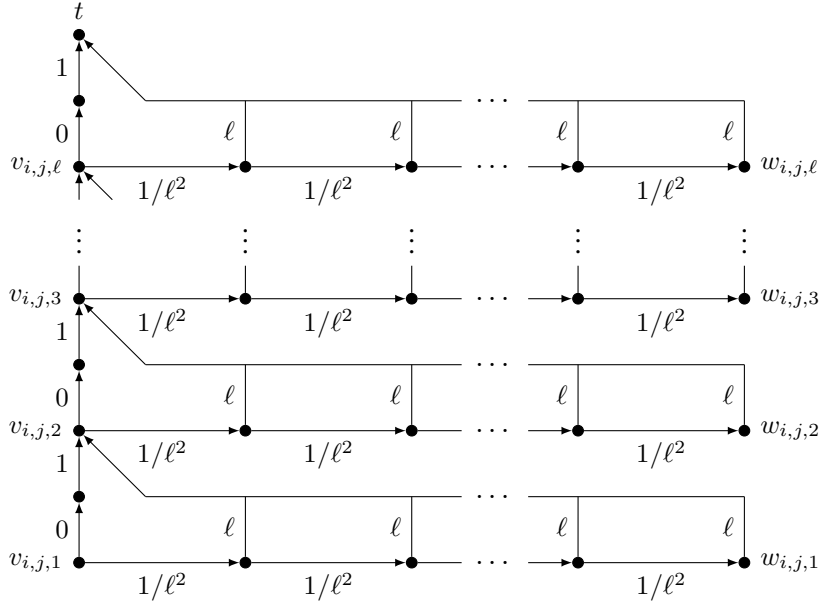		
	
	We continue with the construction of $G$.
	For this purpose we introduce a {\em column} $H_{i,j}$ for each machine $i$ and dimension $j$.
	The structure of these columns, which is illustrated in Figure~\ref{fig:col}, is identical for all machines and dimensions.
	More precisely, each column $H_{i,j}$ consists of $\ell$ {\em levels} and each level $k$, with $1 \leq k \leq \ell$, consists of a directed path starting at a node $v_{i,j,k}$ and ending at $w_{i,j,k}$.
	In between there are $\ell^4$ edges of cost $1/\ell^2$ resulting in a total path cost of $\ell^2$.
	Furthermore, each node of the path has a {\em shortcut} to the next level $k+1$, or to $t$ in the special case of $k = \ell$.
	For a convenient notation let $t = v_{i,j,\ell+1}$.
	We distinguish between two types of shortcuts.
	Shortcuts of the {\em first type} connect $v_{i,j,k}$ with $v_{i,j,k+1}$ via two edges of cost $0$ and $1$ respectively.
	Shortcuts of the {\em second type} connect all remaining nodes on level $k$ with $v_{i,j,k+1}$ via a single edge of cost $\ell$.
	For the sake of a concise representation Figure~\ref{fig:col} merges some shortcuts.
	
	To connect the individual columns, we construct a path $P_{i,k}$ for each machine $i$ and job $q_k$.
	The idea of $P_{i,k}$ is to cross the $k$-th level of all columns $H_{i,j}$ for which $q_k$ has a cost of $1$ in dimension $j$, i.e., $(q_k)_j = 1$.
	As start and endpoint of $P_{i,k}$ we introduce additional nodes $u_{i-1}$ and $u_{i}$ and set $s = u_0$ and $t = u_\ell$.
	A more formal construction of $P_{i,k}$ is as follows:
	Let $j$ be an arbitrary dimension for which $(q_k)_j = 1$.
	Without loss of generality we may assume that at least one such dimension exists.
	Otherwise, $q_k$ could be assigned to any machine without affecting the makespan and would therefore be irrelevant to the schedule.
	If $j$ is the dimension of lowest index satisfying $(q_k)_j = 1$, we draw an edge of cost $1/\ell^2$ from $u_{k-1}$ to $v_{i,j,k}$.
	Similarly, we draw an edge of cost $1/\ell^2$ from $w_{i,j,k}$ to $u_{k}$ if $j$ is the dimension of highest index for which $(q_k)_j = 1$.
	For all intermediate dimensions $j$ satisfying $(q_k)_j = 1$, we draw an edge of cost $1/\ell^2$ from $w_{i,j,k}$ to a distinct intermediate node $u_{i,j,k}$ and another edge of the same cost from $u_{i,j,k}$ to $v_{i,j',k}$ with $j'$ being the dimension of next higher index satisfying $(q_k)_{j'} = 1$.
	Figure~\ref{fig:red} illustrates this construction for a small sample instance of $\mathcal{I}$.
	Note that the nodes $u_1$ and $u_2$ appear twice in the figure, once on the left and once on the right, to keep the drawing simple.
	
	To complete our construction, we introduce a {\em third type} of shortcuts connecting all nodes $u_k \neq t$ and $u_{i,j,k}$ to $t$ via a single edge of cost $\ell$.
	For the sake of a clear representation these shortcuts are not depicted in Figure~\ref{fig:red}.
	Note that the resulting task graph $G$ is acyclic and can be constructed in polynomial time with respect to $\mathcal{I}$.
	It should also be mentioned that some columns of $G$ might contain nodes which are not reachable from $s$.
	However, as argued in Section~\ref{sec:model} we may ignore these nodes as they do not affect the agent's behavior and can be removed from $G$ in polynomial time.
	
	We proceed with the proof of statement (a), i.e., given a partition of jobs $M_1,\ldots,M_m$ for $\mathcal{I}$ resulting in a makespan $\kappa$, we show how to construct a cost configuration $\tilde{c}$ such that $\mathcal{J}$ is motivating for a reward of $\kappa\ell + \ell + 1$.
	The construction of $\tilde{c}$ is simple.
	For each job $q_k$ it is sufficient to assign an extra cost of $\ell$ to the initial edge of all shortcuts starting at a node $v_{i,j,k}$ for which $i$ is the machine $q_k$ is scheduled on, i.e., $q_k \in M_i$, and $j$ is a dimension in which $q_k$ has a cost of $1$, i.e., $(q_k)_j = 1$.
	Furthermore, an extra cost of $\kappa\ell + \ell + 2$ needs to be assigned to the initial edge of all paths $P_{i',k}$ associated with the machines $i'$ job $q_k$ is not scheduled on, i.e., $q_k \notin M_{i'}$.
	
	To show that $\tilde{c}$ is motivating for a reward of $\kappa\ell + \ell + 1$, we argue that an agent with a present bias configuration $\beta \in B^V$ successfully constructs a path from $s$ to $t$ that only consists of paths $P_{i,k}$ for which $q_k \in M_{i}$ and possibly a shortcut of the third type.
	For this purpose, assume the agent is located at some node $v \neq t$ on $P_{i,k}$.
	A case distinction with respect to the type of $v$ shows that the agent either stays on $P_{i,k}$ or takes a shortcut of the third type.
	As the paths $P_{i,k}$ are connected at their terminal nodes, this proves the claim.
	
	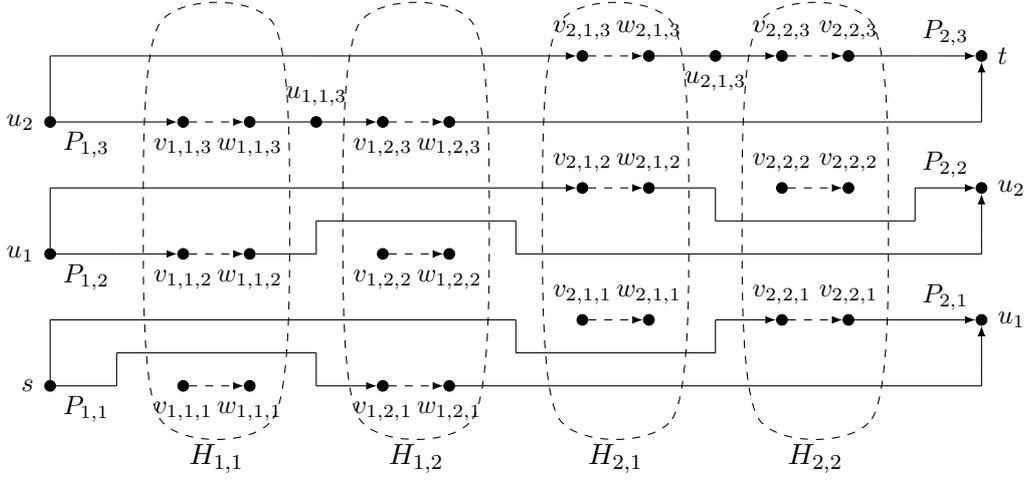
\begin{figure}[t]
		\center
		\begin{tikzpicture}[scale=0.875, nst/.style={draw,circle,fill=black,minimum size=4pt,inner sep=0pt]}, est/.style={draw,>=latex,->}]
			
			\node[nst] (v1) at (0,0) [label=left:\small{$s$}] {};
			\node[nst] (v2) at (2,0) [label=below:\small{$v_{1,1,1}$}] {};
			\node[nst] (v3) at (3,0) [label=below:\small{$w_{1,1,1}$}] {};
			\node[nst] (v4) at (5,0) [label=below:\small{$v_{1,2,1}$}] {};
			\node[nst] (v5) at (6,0) [label=below:\small{$w_{1,2,1}$}] {};
			\node[nst] (v6) at (8,1) [label=above:\small{$v_{2,1,1}$}] {};
			\node[nst] (v7) at (9,1) [label=above:\small{$w_{2,1,1}$}] {};
			\node[nst] (v8) at (11,1) [label=above:\small{$v_{2,2,1}$}] {};
			\node[nst] (v9) at (12,1) [label=above:\small{$v_{2,2,1}$}] {};
			\node[nst] (v10) at (2,2) [label=below:\small{$v_{1,1,2}$}] {};
			\node[nst] (v11) at (3,2) [label=below:\small{$w_{1,1,2}$}] {};
			\node[nst] (v12) at (5,2) [label=below:\small{$v_{1,2,2}$}] {};
			\node[nst] (v13) at (6,2) [label=below:\small{$w_{1,2,2}$}] {};
			\node[nst] (v14) at (8,3) [label=above:\small{$v_{2,1,2}$}] {};
			\node[nst] (v15) at (9,3) [label=above:\small{$w_{2,1,2}$}] {};
			\node[nst] (v16) at (11,3) [label=above:\small{$v_{2,2,2}$}] {};
			\node[nst] (v17) at (12,3) [label=above:\small{$v_{2,2,2}$}] {};
			\node[nst] (v18) at (2,4) [label=below:\small{$v_{1,1,3}$}] {};
			\node[nst] (v19) at (3,4) [label=below:\small{$w_{1,1,3}$}] {};
			\node[nst] (v20) at (5,4) [label=below:\small{$v_{1,2,3}$}] {};
			\node[nst] (v21) at (6,4) [label=below:\small{$w_{1,2,3}$}] {};
			\node[nst] (v22) at (8,5) [label=above:\small{$v_{2,1,3}$}] {};
			\node[nst] (v23) at (9,5) [label=above:\small{$w_{2,1,3}$}] {};
			\node[nst] (v24) at (11,5) [label=above:\small{$v_{2,2,3}$}] {};
			\node[nst] (v25) at (12,5) [label=above:\small{$v_{2,2,3}$}] {};
			\node[nst] (v26) at (14,5) [label=right:\small{$t$}] {};
			\node[nst] (v27) at (14,1) [label=right:\small{$u_1$}] {};
			\node[nst] (v28) at (0,2) [label=left:\small{$u_1$}] {};
			\node[nst] (v29) at (14,3) [label=right:\small{$u_2$}] {};
			\node[nst] (v30) at (0,4) [label=left:\small{$u_2$}] {};
			\node[nst] (v31) at (4,4) [label=above:\small{$u_{1,1,3}$}] {};
			\node[nst] (v32) at (10,5) [label=below:\small{$u_{2,1,3}$}] {};

			\draw[dashed] plot [smooth cycle] coordinates {(1.65,-0.5) (3.35,-0.5) (3.6,2.5) (3.35,5.5) (1.65,5.5) (1.4,2.5)};
			\draw[dashed] plot [smooth cycle] coordinates {(4.65,-0.5) (6.35,-0.5) (6.6,2.5) (6.35,5.5) (4.65,5.5) (4.4,2.5)};
			\draw[dashed] plot [smooth cycle] coordinates {(7.65,-0.5) (9.35,-0.5) (9.6,2.5) (9.35,5.5) (7.65,5.5) (7.4,2.5)};
			\draw[dashed] plot [smooth cycle] coordinates {(10.65,-0.5) (12.35,-0.5) (12.6,2.5) (12.35,5.5) (10.65,5.5) (10.4,2.5)};
			\node at (2.5,-1.1) {$H_{1,1}$};
			\node at (5.5,-1.1) {$H_{1,2}$};
			\node at (8.5,-1.1) {$H_{2,1}$};
			\node at (11.5,-1.1) {$H_{2,2}$};
			
			\path (v1) edge node [below] {\small{$P_{1,1}$}} (1,0)
			(1,0) edge (1,0.5)
			(1,0.5) edge (4,0.5)
			(4,0.5) edge (4,0)
			(4,0) edge[est] (v4)
			(v5) edge (14,0)
			(14,0) edge[est] (v27)
			(v1) edge (0,1)
			(0,1) edge (7,1)
			(7,1) edge (7,0.5)
			(7,0.5) edge (10,0.5)
			(10,0.5) edge (10,1)
			(10,1) edge[est] (v8)
			(v9) edge (13,1) 
			(13,1) edge[est] node [above] {\small{$P_{2,1}$}} (v27)
			(v28) edge node [below] {\small{$P_{1,2}$}} (1,2)
			(1,2) edge[est] (v10)
			(v11) edge (4,2)
			(4,2) edge (4,2.5)
			(4,2.5) edge (7,2.5)
			(7,2.5) edge (7,2)
			(7,2) edge (14,2)
			(14,2) edge[est] (v29)
			(v28) edge (0,3)
			(0,3) edge[est] (v14)
			(v15) edge (10,3)
			(10,3) edge (10,2.5)
			(10,2.5) edge (13,2.5)
			(13,2.5) edge (13,3)
			(13,3) edge[est] node [above] {\small{$P_{2,2}$}} (v29)
			(v30) edge node [below] {\small{$P_{1,3}$}} (1,4) 
			(1,4) edge[est] (v18)
			(v19) edge[est] (v20)
			(v21) edge (14,4)
			(14,4) edge[est] (v26)
			(v30) edge (0,5)
			(0,5) edge[est] (v22)
			(v23) edge[est] (v24)
			(v25) edge (13,5)
			(13,5) edge[est] node [above] {\small{$P_{2,3}$}} (v26)
			(v2) edge[est,dashed] (v3)
			(v4) edge[est,dashed] (v5)
			(v6) edge[est,dashed] (v7)
			(v8) edge[est,dashed] (v9)
			(v10) edge[est,dashed] (v11)
			(v12) edge[est,dashed] (v13)
			(v14) edge[est,dashed] (v15)
			(v16) edge[est,dashed] (v17)
			(v18) edge[est,dashed] (v19)
			(v20) edge[est,dashed] (v21)
			(v22) edge[est,dashed] (v23)
			(v24) edge[est,dashed] (v25);
		\end{tikzpicture}
		\caption{Reduction from a VS instance with $2$ machines and jobs $q_1 = (0,1)$, $q_2 = (1,0)$, $q_3 = (1,1)$}\label{fig:red}
	\end{figure}
	
	We begin with the case $v = u_{k-1}$.
	When located at such a node $v$ the agent has three choices.
	First, she might plan to enter a path $P_{i',k}$ for which $i' \neq i$.
	However, $\tilde{c}$ assigns an extra cost of $\kappa\ell + \ell + 2$ to the initial edge of $P_{i',k}$.
	Therefore the agent is not motivated to take $P_{i',k}$ no matter what the exact value of $\beta(v)$ is.
	Secondly, the agent might plan to enter $P_{i,k}$ from $v$.
	If she plans to follow the shortcuts of the first kind all the way to $t$ after traversing the first edge of $P_{i,k}$, she faces an immediate edge of cost $1/\ell^2$ and at most $\ell$ future edges of cost $1$, one per subsequent level.
	Furthermore, the number of future edges with an extra cost of $\ell$ is bounded by the makespan $\kappa$.
	Considering that $1/\ell^2 \leq \beta(v)$, we conclude that the agent's perceived cost for entering $P_{i,k}$ is at most $1/\ell^2 + \beta(v) (\ell + \kappa\ell) \leq \beta(v) + \beta(v)(\ell + \kappa\ell) = \beta(v)(\kappa\ell + \ell + 1)$.
	As this matches her perceived reward, entering $P_{i,k}$ must be motivating.
	Thirdly, the agent might plan to take the shortcut of the third type from $v$.
	If she does so, she immediately reaches $t$.
	If not, the only motivating option that remains is $P_{i,k}$.
	Either way our claim holds true.
	
	We continue with the case $v = u_{i,j,k}$, which is very similar to the previous one.
	Again the agent may either stay on $P_{i,k}$ or take the shortcut of the third type.
	Analyzing the perceived cost of these two options is identical to the previous case.
	The only difference is that the agent is missing the first option, which is to enter a path $P_{i',k}$ for which $i' \neq i$.
	However, this does not affect her motivation to stay on $P_{i,k}$ nor does it affect the possibility of her taking the shortcut.	
	
	Next we consider the case that $v$ is a node on the $k$-th level of some column $H_{i,j}$ different from~$w_{i,j,k}$.
	When located at $v$ the agent has two choices.
	Either she takes the immediate shortcut incident to $v$ or she traverses an edge of $P_{i,k}$.
	Her perceived cost of the first option is $\ell + \beta(v) d_{\tilde{c}}(v_{i,j,k+1})$ while her perceived cost of the second option is at most $1/\ell^2 + \beta(v) (\ell + d_{\tilde{c}}(v_{i,j,k+1}))$ if she plans to take the shortcut of the second type immediately after traversing the edge of $P_{i,k}$.
	Consequently, the agent prefers to stay on $P_{i,k}$ whenever $\ell > 1/\ell^2 + \beta(v)\ell$ or $1 - 1/\ell^3 > \beta(v)$ if we rearrange the inequality.
	Considering that $\ell \geq 2$ and $\beta(v) \leq 1/2$ hold true, this inequality is certainly satisfied.
	It remains to show that staying on $P_{i,k}$ is also a motivating option.
	However, following the same line of argument we have used for the perceived cost of $P_{i,k}$ in the case $v = u_{i,j,k}$, this should be easy to see.
	
	Finally, we consider the remaining case $v = w_{i,j,k}$.
	Similar to the previous case the agent has two options.
	Either she takes the immediate shortcut incident to $v$ or she traverses $P_{i,k}$ and ends up at $u_{i,j,k}$.
	Her perceived cost of the first option is $\ell + \beta(v) d_{\tilde{c}}(v_{i,j,k+1}) \geq \ell$.
	In contrast, if she moves to $u_{i,j,k}$ and then takes the shortcut of the third kind immediately after, her perceived cost is ${1/\ell^2 + \beta(v) \ell}$.
	As a result she prefers the first option whenever $\ell > 1/\ell^2 + \beta(v)\ell$.
	But from the previous case we already know that this inequality is always satisfied.
	Furthermore, the fact that $1/\ell^2 \leq \beta(v)$ implies that the agent's perceived cost for moving to $u_{i,j,k}$ is at most $1/\ell^2 + \beta(v)\ell \leq \beta(v) + \beta(v)\ell \leq \beta(v)(\kappa\ell + \ell + 1)$.
	As this matches her perceived reward, we know that moving to $u_{i,j,k}$ is motivating.
	
	We now come to the proof of (b).
	For this purpose assume $\mathcal{J}$ has a cost configuration $\tilde{c}$ that is motivating for a reward of $r$.
	Our goal is to schedule the jobs of $\mathcal{I}$ with a makespan of at most~$2r/\ell$.
	Since any schedule has a makespan less or equal to $\ell$, we focus on the case $r < \ell^2/2$.
	As a result, the agent's perceived reward becomes $r/\ell^2 < 1/2$ whenever her present bias takes the value~$1/\ell^2$.
	However, this implies that the agent must not take shortcuts of the first kind.
	The reason is that her perceived cost at the intermediate node of the shortcut is at least $1$ and therefore she might lose motivation.
	Furthermore, the agent is not motivated to take shortcuts of the second or third type whenever her present bias is $1/\ell^2$.
	Consequently, if we fix the agent's present bias to $\beta(v) = 1/\ell^2$ at all nodes $v$ of $G$, she must construct a path $P$ from $s$ to $t$ that does not contain shortcuts of any type.
	This means that $P$ can be divided into a sequence of paths $P_{i,k}$.
	By assigning job $q_k$ to machine $i$ if $P_{i,k}$ is contained in $P$ we obtain a feasible partition $M_1,\ldots,M_m$.	
	
	Next we argue that the makespan of $M_1,\ldots,M_m$ is at most $2r/\ell$.
	For this purpose we first show that $\tilde{c}$ must assign an extra cost of at least $\ell/2 - 1$ to all shortcuts starting at a node $v_{i,j,k}$ for which $q_{k}$ is scheduled on $i$, i.e., $q_{k} \in M_i$.
	For this purpose, assume the agent is located at $v_{i,j,k}$.
	Note that this scenario is indeed possible if $\beta(v) = 1/\ell^2$ for all nodes $v$ of $P$ that come before~$v_{i,j,k}$.
	Furthermore, assume that $\beta(v_{i,j,k}) = 1/2$.
	As we have argued in the previous paragraph, the agent must not take the shortcut at $v_{i,j,k}$, but stay on $P_{i,k}$.
	Let $P'$ denote her planned path.
	We distinguish between two possible scenarios for $P'$.
	First, $P'$ might follow the path $P_{i,k}$ until the agent reaches another column or the node $u_k$.
	In this case, the first $\ell^4 + 1$ edges of $P'$ are all of cost~$1/\ell^2$.
	As a result, the perceived cost of $P'$ is at least ${1/\ell^2 + \beta(v_{i,j,k})\ell^2 = 1/\ell^2 + \ell^2/2}$.
	Considering that $r < \ell^2/2$, this cannot be motivating.
	Secondly, $P'$ might contain a shortcut of the second type to the next level of $H_{i,j}$.
	Even if we neglect potential extra cost $\tilde{c}$ may assign to the current level of $H_{i,j}$ and furthermore assume that the agent takes the very next shortcut, her perceived cost of $P'$ is at least $1/\ell^2 + \beta(v_{i,j,k})(\ell + d_{\tilde{c}}(v_{i,j,k+1}))$ in this case.
	In contrast, the agent's perceived cost for taking the immediate shortcut at $v_{i,j,k}$ is only $\beta(v_{i,j,k})(1 + d_{\tilde{c}}(v_{i,j,k+1}))$ if we neglect potential extra cost.
	Since the agent must not enter the shortcut at $v_{i,j,k}$, we conclude that the cost configuration $\tilde{c}$ assigns an extra cost greater than $1/\ell^2 + \beta(v_{i,j,k})(\ell - 1) > \ell/2 - 1$ to the shortcut.
	
	To see that the makespan of our schedule is at most $2r/\ell$, consider the workload $\kappa$ on an arbitrary machine $i$ in an arbitrary dimension $j$.
	Assuming that $\kappa > 0$, let $q_k$ be the job of lowest index scheduled on $i$ such that $(q_k)_i = 1$.
	Furthermore, assume that the agent is located at $v_{i,j,k}$ and that her current present bias is $\beta(v_{i,j,k}) = 1/\ell^2$.
	As argued in the previous paragraph, this assumption is justified.
	We continue with a case distinction on the path $P'$ the agent plans when located at $v_{i,j,k}$.
	In general, $P'$ can have one of the following two forms:
	Either it exits column $H_{i,k}$ to the right or it is completely contained in $H_{i,k}$ and climbs to the top level via shortcuts of the first and or second type.
	If $P'$ exits to the right, it must contain at least $\ell^4 + 1$ edges costing $1/\ell^2$ each.
	Clearly, the agent's perceived cost of such a $P'$ is at least $1/\ell^2 + \beta(v_{i,j,k})(\ell^4/\ell^2) > 1$ and cannot be motivating for a perceived reward $\beta(v_{i,j,k})r = r/\ell^2 < 1/2$.
	However, this means that some path $P'$ of the second form must be motivating.
	In this case, $P'$ consists of at least $\kappa$ shortcuts that according to the result from the previous paragraph all have a cost of at least $\ell/2$ with respect to $\tilde{c}$.
	As a result, the perceived cost of $P'$ is at least $\beta(v_{i,j,k})\kappa\ell/2 = \kappa/(2\ell)$.
	To make sure that this cost does not exceed $\beta(v_{i,j,k})r = r/\ell^2$, the total load $\kappa$ of machine $i$ in dimension $j$ can be at most $2r/\ell$.
\end{proof}

\subsection{Occasionally Unbiased Agents}

Although VPB is hard to solve in general, a curious special case consisting of all present bias sets $B$ for which $1 \in B$ is not.
Note that agents whose present bias varies within such a $B$ becomes temporarily unbiased whenever $1$ is drawn.
For this reason we call these agents {\em occasionally unbiased}.
A behavioral pattern unique to occasionally unbiased agents is that they may start to walk along a cheapest path at any point in time whenever their present bias becomes $1$.
As a result we can reduce VPB to a decision problem we call CRITICAL NODE SET (CNS) for occasionally unbiased agents.

\begin{definition}[CNS]
	Given a task graph $G$, present bias set $B$ and reward $r$, decide the existence of a critical node set $W$. 
\end{definition}

We consider a node set $W$ {\em critical} if the following properties hold:
(a) $s \in W$.
(b) Each node $v \in W$ has a path $P$ to $t$ that only uses nodes of $W$.
(c) All edges $e$ of $P$ satisfy $d_{b}(e) \leq br$ with $b = \min B$.
As it turns out, such a $W$ contains exactly those nodes an occasionally unbiased agent may visit with respect to a motivating cost configuration.
This allows us to reduce VPB to CNS.

\begin{proposition}\label{prop:cvs}
	Assuming that $1 \in B$, then VPB has a solution if and only if CNS has one.
\end{proposition}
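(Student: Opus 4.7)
The proof splits into the two implications of the equivalence. For the forward direction, suppose $\tilde{c}$ is a cost configuration motivating for every $\beta \in B^V$, and let $W$ be the set of nodes the agent in $G_{\tilde{c}}$ may visit under some $\beta \in B^V$ and some tie-breaking. I would verify that $W$ is critical. Property (a) is immediate. For (b), given $v \in W$, fix a configuration under which $v$ is visited and replace it with $\beta'$ taking value $b$ at $v$ and at every descendant of $v$ in $G$; the walk up to $v$ depends only on bias values at non-descendants and so is unaffected, and since $\tilde{c}$ is motivating for $\beta'$ the agent continues from $v$ to $t$ through nodes that all lie in $W$. For (c), every edge $(v',w')$ on this tail walk is motivating for a $b$-biased agent, giving $d_{b,\tilde{c}}(v',w') \le br$; from $\tilde{c}\ge 0$ and $d(w') \le d_{\tilde{c}}(w')$ we obtain $d_b(v',w') \le d_{b,\tilde{c}}(v',w') \le br$ in the original graph $G$.

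For the backward direction, given a critical $W$, pick a successor $\sigma(v) \in W$ for each $v \in W \setminus \{t\}$ so that $(v,\sigma(v))$ is the first edge of a witnessing path; condition (c) then gives $d_b(v,\sigma(v)) \le br$. A key preliminary observation is that $d(v) \le r$ for every $v \in W$, proven by reverse induction along $\sigma$-paths: combining $c(v,\sigma(v)) \le b(r - d(\sigma(v)))$ with $d(v) \le c(v,\sigma(v)) + d(\sigma(v))$ yields $d(v) \le br + (1-b)d(\sigma(v)) \le r$ using the inductive hypothesis. With $\sigma$ in hand, I would define $\tilde{c}$ to assign substantial penalties to every edge not belonging to the $\sigma$-tree, while leaving $\sigma$-edges untouched so that the in-$W$ structure remains cheap.

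The technical heart, and the step I expect to be the main obstacle, is verifying that the constructed $\tilde{c}$ is motivating under every $\beta \in B^V$. The hypothesis $1 \in B$ plays a pivotal role here: an agent who becomes fully rational at some reachable $v$ takes the cheapest path in $G_{\tilde{c}}$, so that path must cost at most $r$; at the same time, a $b$-biased agent at $v$ finds $(v,\sigma(v))$ motivating only if $c(v,\sigma(v)) + b\,d_{\tilde{c}}(\sigma(v)) \le br$, which via condition (c) forces $d_{\tilde{c}}(\sigma(v))$ to match $d(\sigma(v))$ essentially exactly. Reconciling these two demands means $\sigma$ must be selected so that the resulting tree coincides with a cheapest-path tree of $G$ when restricted to $W$, and the penalties must be designed to preserve cheapest-path distances from $W$-nodes while still deterring the agent from stepping outside the $\sigma$-structure. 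I anticipate this balance — penalizing just enough to confine the agent to the intended skeleton while preserving cheapest-path costs — to be the technically most demanding part of the argument.
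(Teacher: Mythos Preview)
Your forward direction (VPB $\Rightarrow$ CNS) is correct and close to the paper's. One variation: the paper sets the bias to $1$ at all nodes beyond $v$ (this is the place where it invokes $1\in B$) so that the agent walks a $\tilde c$-cheapest path, which then serves as the witnessing path $P$; you instead set the bias to $b$ from $v$ onward and take the resulting walk as $P$. Both choices work.

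Your backward direction (CNS $\Rightarrow$ VPB) is where you make life hard for yourself. The paper's construction is much simpler: assign extra cost $r+1$ only to edges \emph{leaving} $W$, i.e.\ edges $(v,w)$ with $v\in W$ and $w\notin W$, and leave every other edge untouched. The agent then cannot leave $W$, and at each $v\in W$ the critical path $P=v,w,\ldots,t$ lies entirely in $W$ and hence carries no extra cost; so $d_{\tilde c}(w)\le\sum_{e\in P'}c(e)$ and a short computation gives $d_{\beta(v),\tilde c}(v,w)\le\beta(v)r$ for every $\beta(v)\in B$. No successor map $\sigma$, no tree, no alignment with shortest paths is required.

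The obstacle you anticipate --- that $d_{\tilde c}(\sigma(v))$ must coincide with $d(\sigma(v))$ --- is self-inflicted: it arises precisely because you penalize \emph{every} non-$\sigma$ edge, which inflates $d_{\tilde c}$ to the $\sigma$-path cost and then forces you to align $\sigma$ with a cheapest-path tree. By leaving all edges inside $W$ alone, the paper sidesteps this entirely. Note also that the paper does \emph{not} invoke $1\in B$ in this implication; it is used only in the other direction.
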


\begin{proof}
	($\Rightarrow$) To prove the first implication, assume a critical node set $W$ exists.
	Our goal is to construct a cost configuration $\tilde{c}$ that is motivating for all $\beta \in B^V$.
	For this purpose, we assign an extra cost of $\tilde{c}(v,w) = r+1$ to all edges $(v,w)$ that leave $W$, i.e., all edges for which $v \in W$ but $w \notin W$, and set $\tilde{c}(e) = 0$ for the remaining edges.
	The resulting $\tilde{c}$ is motivating for two reasons:
	(a)~Whenever the agent is inside $W$, she is not motivated to leave $W$.
	(b)~Each node $v \in W \setminus \{t\}$ has at least one successor $w \in W$ the agent is motivated to visit next.
	Together with the fact that $s \in W$ we conclude that the agent moves through $W$ until she eventually reaches $t$.
	
	In the following we show (a) and (b).
	Since each edge $(v,w)$ that leaves $W$ has an extra cost of $r + 1$, the perceived cost of $(v,w)$ trivially exceeds the agent's perceived reward of $\beta(v) r$.
	Therefore $(v,w)$ cannot be motivating, which proves (a).
	We proceed with (b). 
	For each node $v \in W \setminus \{t\}$ we need to show the existence of a direct successor $w \in W$ such that $(v,w)$ is motivating for all $\beta(v) \in B$.
	To prove this, remember that $W$ is a critical node set.
	Consequently, there must exist a path $P = v,w,\ldots,t$ that consists exclusively of nodes from $W$ such that $c(v,w) + b \sum_{e \in P'} c(e) \leq br$, with $P'$ being the path obtained by removing the initial edge of $P$.
	Because $P$ is contained in $W$, no edge of $P$ charges extra cost.
	By definition of $b$ we also know that $\beta(v)/b \geq 1$ and we can bound the perceived cost of $(v,w)$ from above by
	\[d_{\beta(v),\tilde{c}}(v,w) \leq c(v,w) +  \beta(v) \sum_{e \in P'} c(e) \leq \frac{\beta(v)}{b} \Bigl(c(v,w) + b \sum_{e \in P'} c(e)\Bigr) \leq \beta r.\]
	As this bound matches the perceived reward, $(v,w)$ must be motivating.
	
	($\Leftarrow$) To prove the reverse, assume that $G$ has a cost configuration $\tilde{c}$ such that $G_{\tilde{c}}$ is motivating for all present bias configurations $\beta \in B^V$.
	Moreover, let $W'$ be the set of all nodes the agent might visit on her path from $s$ to $t$.
	Our goal is to argue that $W'$ is a critical node set, i.e., $W'$ meets the following requirements:
	(a)~$s \in W'$.
	(b)~Each node $v \in W'$ has a path $P$ to $t$ that only uses nodes of~$W'$.
	(c)~All edges $e$ of $P$ satisfy $d_{b}(e) \leq br$ with $b = \min B$.
	By definition of $W'$ (a) is trivially satisfied.
	To prove (b) and (c), let $v$ be some node of $W'$ and choose $w$ as the immediate successor of $v$ the agent visits for a present bias of $\beta(v) = b$.
	Furthermore, let $P'$ be a cheapest path from $w$ to $t$ with respect to $\tilde{c}$.
	By adding $(v,w)$ to the initial node of $P'$, we obtain a path $P$ satisfying (b) and~(c).

\begin{algorithm}[t]\label{alg:dcns}
	\caption{\sc DecideCriticalNodeSet}
	\SetKw{KwRevTopOrd}{in reverse topological order}
	\SetKw{KwElseRet}{else return}
	\SetKw{KwElse}{else}
    $\delta(t) \gets 0$\;
	\ForEach{$v \in V \setminus \{t\}$ \KwRevTopOrd}{
		$U \gets \{w \mid (v,w) \in E \text{ and } c(v,w) + \beta \delta(w) \leq b\}$\;
		\lIf{$U = \emptyset$}{$\delta(v) \gets \infty$; \KwElse $\delta(v) \gets \min\{c(v,w) + \delta(w) \mid w \in U\}$}
	}
	\lIf{$\delta(s) < \infty$}{\Return ``yes'' \KwElseRet ``no''}
\end{algorithm}
	
	We first show (c).
	Recall that the agent is motivated to traverse $(v,w)$ for a present bias of $\beta(v) = b$.
	The fact that $P'$ is a cheapest path from $w$ to $t$ with respect to $\tilde{c}$ immediately implies that the perceived cost of $P$ is at most
	\[c(v,w) + b \sum_{e \in P} c(e) \leq c(v,w) + \tilde{c}(v,w) + b \sum_{e \in P}\bigl(c(e)\ + \tilde{c}(e)\bigr) = d_{b,\tilde{c}}(v,w) \leq br.\]
	We continue with (b).
	Assuming the agent is located at $v$, consider a present bias configuration $\beta$ that assigns a value of $\beta(v) = b$ to $v$ and $\beta(v') = 1$ to all nodes $v' \neq v$ located on paths from $v$ to~$t$.
	Because $1 \in B$, such a present bias configuration is possible.
	By choice of $(v,w)$, we also know that the agent potentially traverses $w$ and at that point follows a cheapest path to $t$ with respect to $\tilde{c}$.
	Since $P'$ is such a path, we conclude that the agent may visit any node of $P$ and therefore all nodes of $P$ are certainly contained in $W'$.
\end{proof}

All that remains to show is that CNS is decidable in polynomial time.
A straight forward approach to this simple algorithmic problem is {\sc DecideCriticalNodeSet}, see Algorithm~\ref{alg:dcns}.
The main idea of this algorithm is simple.
Going through each node $v$ of $G$ in reverse topological order, {\sc DecideCriticalNodeSet} computes the values $\delta(v)$ denoting the cost of a cheapest path from $v$ to $t$ with respect to the $\delta$ values computed so far.
If the perceived cost at $v$ with respect to $\delta$ is too expensive, then $v$ cannot be contained in any critical node set and the algorithm sets $\delta(v) = \infty$.
Consequently, only nodes with $\delta(v) < \infty$ can be part of a critical node set.
In particular, we know that a critical node set exists if and only if $\delta(s) < \infty$.
Clearly, {\sc DecideCriticalNodeSet} can be executed in polynomial time which proves that CNS, can be decided in polynomial time for occasionally unbiased agents.
As a corollary of Proposition~\ref{prop:cvs}, so can VPB.

\begin{corollary}
	If $1 \in B$, then VPB can be solved in polynomial time.
\end{corollary}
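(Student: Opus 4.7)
The plan is to combine Proposition~\ref{prop:cvs}, which reduces VPB to CNS whenever $1 \in B$, with a polynomial-time algorithm that decides CNS and, when it returns ``yes'', also outputs an explicit motivating cost configuration. Given Proposition~\ref{prop:cvs}, the only thing left is to verify that {\sc DecideCriticalNodeSet} correctly decides CNS in polynomial time, and that a witness critical node set can be extracted from its computation so that the construction in the ($\Rightarrow$) direction of Proposition~\ref{prop:cvs} can be invoked to build~$\tilde{c}$.

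First I would fix notation and interpret the algorithm: think of $\delta(v)$ as the cost of a cheapest path from $v$ to $t$ that stays within the set of already-processed nodes whose own $\delta$-value is finite, where the set $U$ in line~3 restricts attention to successors $w$ through which the edge $(v,w)$ is still perceivable as motivating under the minimum bias $b = \min B$ and reward $r$ (i.e., $c(v,w) + b\,\delta(w) \leq br$). I would then prove by reverse topological induction the invariant that $\delta(v) < \infty$ holds if and only if there is a path $P$ from $v$ to $t$ along which every edge $e$ satisfies $d_b(e) \leq br$ and every node other than $v$ itself already has finite $\delta$. The base case $v=t$ is immediate since $\delta(t)=0$. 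For the inductive step, if $\delta(v)<\infty$ then the minimizer $w \in U$ concatenates $(v,w)$ with the path witnessing $\delta(w)<\infty$, and every edge of the resulting path satisfies the perceived-cost bound; conversely, any path witness for $v$ starts with some edge $(v,w)$ that must land in~$U$.

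From this invariant, letting $W = \{v \mid \delta(v) < \infty\}$, the set $W$ automatically satisfies properties (b) and~(c) of a critical node set. Hence CNS has a solution exactly when $s \in W$, i.e., when $\delta(s) < \infty$, which is what the algorithm reports. Whenever the answer is ``yes'', the ($\Rightarrow$) direction of the proof of Proposition~\ref{prop:cvs} prescribes the cost configuration $\tilde{c}(v,w) = r+1$ on edges leaving $W$ and $\tilde{c}(e)=0$ elsewhere, and this configuration is motivating for every $\beta \in B^V$. The algorithm runs in polynomial time because the reverse topological order is computed once in $O(n+m)$, and each node's processing inspects its outgoing edges once; combined with the polynomial-time reduction of Proposition~\ref{prop:cvs}, this yields a polynomial-time algorithm for VPB under the hypothesis $1 \in B$.

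The main obstacle I anticipate is purely conceptual rather than technical: one must argue that the greedy choice of always taking the minimum-cost affordable successor inside $W$ is consistent with the global requirement defining critical sets. This is really the content of the induction above, and it works because affordability ($c(v,w) + b\,\delta(w) \leq br$) is a local property depending only on already-finalized successor values, so there is no circular dependency to untangle and no need to look beyond the topologically later nodes when deciding the status of~$v$.
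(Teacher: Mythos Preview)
Your proposal is correct and follows essentially the same approach as the paper: invoke Proposition~\ref{prop:cvs} to reduce VPB to CNS, then argue that {\sc DecideCriticalNodeSet} decides CNS in polynomial time by interpreting $\delta(v)$ as a restricted shortest-path value and characterising the set $\{v \mid \delta(v)<\infty\}$ as a critical node set. The paper's own justification is terser (a single informal paragraph before the corollary), whereas you spell out the reverse topological induction and the extraction of the witness $W$ and $\tilde{c}$, but the underlying idea is identical.
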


\subsection{The Price of Variability}

To conclude our work, we take a step back from computational considerations and look at the implications of variability from a more general perspective.
Our goal is to quantify the conceptual loss of efficiency incurred by going from a fixed and known present bias to an unpredictable and variable one.
Similar to the price of uncertainty we define the {\em price of variability} as the following ratio.

\begin{definition}[Price of Variability]\label{def:pov}
	Given a task graph $G$ and a present bias set $B$, the price of variability is defined as $r(G,B^V)/\sup\{r(G,\{\beta\}) \mid \beta \in B\}$.
\end{definition}

It seems obvious that the price of variability depends closely on the structure of $G$ and $B$.
Nevertheless, we would like to find general bounds for the price of variability much like we did in Section~\ref{sec:upb} for the price of uncertainty.
As a first step, it is instructive to note that the price of uncertainty is a natural lower bound for the price of variability.
The reason for this is that each cost configuration that motivates an agent whose present bias varies arbitrarily in $B$ must also motivate an agent whose present bias is a fixed value from $B$.
Therefore it holds true that $r(G,B^V) \geq r(G,B)$, which immediately implies the stated bound. 
Sometimes this bound is tight.
Consider for instance Alice and Bob's scenario.
As we have shown in Section~\ref{sec:upb}, it is possible to construct a cost configuration $\tilde{c}$ verifying a price of uncertainty of $1$.
Using similar arguments, it is easy to see that $\tilde{c}$ remains motivating if we allow the present bias to vary, implying an identical price of variability.
However, for general instances of $G$ and $B$ this tight relation between the price of uncertainty and the price of variability is lost.
In fact, we can show that unlike the price of uncertainty, which has a constant upper bound of $2$, the price of variability may become arbitrarily large as the range of $B$ increases.

\begin{proposition}\label{prop:lpv}
	There exists a family of task graphs and present bias sets for which the price of variability converges to $\tau/2$.
\end{proposition}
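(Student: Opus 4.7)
The plan is to construct a parameterized family of task graphs $G_n$ together with a present bias set $B = \{b, \tau b\}$ so that, as $n \to \infty$, the ratio $r(G_n, B^V)/\sup\{r(G_n, \{\beta\}) : \beta \in B\}$ approaches $\tau/2$. The construction will mirror the spirit of Proposition~\ref{prop:lpu}: a long main path from $s$ to $t$ together with a single shortcut edge (or a short shortcut subpath) bypassing most of the main path. The edge costs will be calibrated so that, for each individual $\beta \in B$, a lightweight cost configuration steers the agent along the main path at reward roughly $R$, whereas a cost configuration that succeeds for every $\beta \in B^V$ must guarantee reward close to $(\tau/2)R$.

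First I would write down the family $G_n$ explicitly, in direct analogy with Figure~\ref{fig:lpu}, and fix costs so that the main path is the ``intended'' route to $t$. Second, I would verify the upper bound $\sup_{\beta \in B} r(G_n,\{\beta\}) \le R$ by exhibiting, separately for $\beta = b$ and $\beta = \tau b$, cost configurations that make the shortcut strictly less attractive than staying on the main path at every decision node and checking motivation edge-by-edge along the main path; the arithmetic here is of the same flavor as the computation of $\tilde a$ and $\tilde b$ in the proof of Proposition~\ref{prop:lpu}.

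Third -- and this is where the real work lies -- I would prove the matching lower bound $r(G_n, B^V) \ge (\tau/2)R(1 - o_n(1))$ by contradiction. Assume a cost configuration $\tilde c$ works for all $\beta \in B^V$ with smaller reward. I would pick two adversarial present bias configurations. In the first, $\beta$ is set to $\tau b$ at the node where the shortcut departs and to $b$ further along the main path: to prevent the shortcut from being preferred under this ``mildly biased at entry'' scenario, $\tilde c$ must assign a specific lower bound on the shortcut's extra cost, expressed in terms of the main-path costs. In the second, $\beta$ is set to $b$ at the shortcut's landing node, forcing a direct bound on $r$ coming from the shortcut's now-undiscounted tail (using the extra cost just derived). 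Algebraically combining the two constraints, the $\tau$-factor gap between $\tau b$ on the entry side and $b$ on the tail side contributes a factor approaching $\tau/2$, after the main-path costs are tuned so that the term in $\tilde c$ and the term from the bare cost tail enter the bound symmetrically.

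The main obstacle is this calibration: one must choose the main-path length and the cost of the shortcut so that the two adversarial constraints yield ratios converging to \emph{exactly} $\tau/2$, not merely to some $\Omega(\tau)$. As in Proposition~\ref{prop:lpu}, the role of the free parameter $n$ (or equivalently a small auxiliary parameter $a$) is to absorb the additive lower-order terms so the ratio converges to $\tau/2$ in the limit.
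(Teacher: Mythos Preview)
Your proposal has a genuine gap: a single shortcut, in the style of Proposition~\ref{prop:lpu}, cannot force a price of variability approaching $\tau/2$. The two adversarial configurations you describe do not interact. If the first configuration forces enough extra cost onto the shortcut that the agent stays on the main path, then the agent never reaches the shortcut's intermediate node, and your second constraint (``$\beta=b$ at the landing node'') is vacuous. Conversely, a cost configuration that simply loads the single shortcut with enough extra cost to block it for every $\beta\in B$ is motivating for all $\beta\in B^V$ at essentially the same reward as for the fixed bias $b=\min B$: once the shortcut is dead, the agent walks the main path and her worst perceived edge cost along it does not depend on how $\beta$ varies from node to node. Hence the ratio you can extract from one shortcut is bounded by a constant independent of~$\tau$.

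The paper's construction differs in two essential ways. First, it takes $B=\{a,1\}$ (so $\tau=1/a$) and attaches a shortcut through a common node $w$ at \emph{every} node $v_0,\ldots,v_{1/a^2}$ of a main path of length $\Theta(\tau^2)$. The fan of shortcuts is precisely what keeps $\sup_\beta r(G,\{\beta\})$ at $O(\tau)$: the fixed-$a$ agent always \emph{plans} to take the very next shortcut, so her perceived cost at each $v_i$ is $1+a(2+1/a)=O(1)$ rather than the full $\Theta(\tau^2)$ path length. Second, the lower bound for variable bias is an \emph{induction over the shortcuts}, driven by the momentarily unbiased agent: at $v_i$ with $\beta(v_i)=1$ she would follow a genuinely cheapest path, so $\tilde c$ must raise the cost of shortcut $i$ above the remaining main-path alternative---which is itself already inflated by the extra cost previously forced onto shortcuts $i+1,i+2,\ldots$. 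This cascade pushes the cheapest $s$--$t$ path up to $\Theta(\tau^2)$, giving $r(G,B^V)\ge 1/a^2$ and hence the ratio $\tau/2$. Both ingredients---the fan of $\Theta(\tau^2)$ shortcuts and the inductive cascade exploiting $1\in B$---are absent from your plan, and without them the calibration you describe cannot close.
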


\begin{proof}
	To obtain a price of variability close to $\tau/2$ we consider an occasionally unbiased agent with respect to the set $B = \{a,1\}$ for some $0 < a < 1/2$ such that $1/a$ is integral.
	Furthermore, we construct a task graph $G$ consisting of a directed path $v_0,v_1,\ldots,v_{1/a^2+1/a+2}$ whose edges are all of cost $1$.
	We call this the {\em main path}.
	In addition to the main path we introduce $1/a^2 + 1$ {\em shortcuts} via a common node $w$.
	Each shortcut $i$ with $0 \leq i \leq 1/a^2$ consists of two edges.
	The first edge goes from $v_i$ to $w$ for a cost of $2$ while the second edge goes from $w$ to $t$ for a cost of $1/a$.
	As source and target node, we choose $s = v_0$ and $t = v_{1/a^2+1/a+2}$.
	Figure~\ref{fig:lpv} shows a sketch of $G$.
	Note that some edges of the shortcuts are merged for the sake of a concise representation.
	
	\begin{figure}[t]
		\center
		\begin{tikzpicture}[scale=0.875, nst/.style={draw,circle,fill=black,minimum size=4pt,inner sep=0pt]}, est/.style={draw,>=latex,->}]
		
			\node[nst] (v2) at (2.5,0.5) [label=left:\small{$s$}] {};
			\node[nst] (v3) at (5,0) [label=below:\small{$v_1$}] {};
			\node[nst] (v4) at (7.5,0) [label=below:\small{$v_2$}] {};
			\node[nst] (v5) at (10,0) [label=below:\small{$v_{1/a^2+1/a+1}$}] {};
			\node[nst] (v6) at (12.5,0.5) [label=right:\small{$t$}] {};
			\node[nst] (v7) at (10,1) [label=above:\small{$w$}] {};
			
			\node at (8.75,0) {$\dots$};
			\node at (8.75,1) {$\dots$};			
			
			\path (v2) edge (2.5,0) 
			(2.5,0) edge[est] node [below] {\small{$1$}} (v3)
			(v3) edge[est] node [below] {\small{$1$}} (v4)
			(v4) edge (8.25,0)
			(9.25,0) edge[est] (v5)
			(v5) edge node [below] {\small{$1$}} (12.5,0)
			(12.5,0) edge[est] (v6)
			(2.5,1) edge node [above] {\small{$2$}} (5,1)
			(v2) edge (2.5,1)
			(v3) edge node [left] {\small{$2$}} (5,1)
			(v4) edge node [left] {\small{$2$}} (7.5,1)
			(v5) edge[est] node [left] {\small{$2$}} (v7)
			(5,1) edge (8.25,1)
			(9.25,1) edge[est] (v7)
			(v7) edge node [above] {\small{$1/a$}} (12.5,1)
			(12.5,1) edge[est] (v6);
			\end{tikzpicture}
		\caption{Task graph with a price of variability of $(1/a^2)/(2/a+2)$}\label{fig:lpv}
	\end{figure}
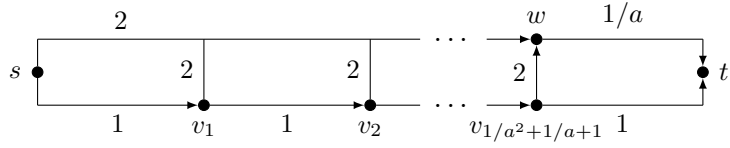
	
	The remainder of the proof has a similar structure to that of Proposition~\ref{prop:lpu}.
	We first argue that a reward of $2/a+2$ is sufficiently motivating for any agent with a fixed present bias of either $a$ or~$1$, implying $\sup\{r(G,\{a\}),r(G,\{1\})\} \leq 2/a+2$.
	We then show that no cost configuration $\tilde{c}$ can motivate an occasionally unbiased agent for a reward less than $1/a^2$, i.e., $r(G,\{a,1\}^V) \geq 1/a^2$.
	As a result, the price of variability must be at least $(1/a^2)/(2/a+2)$.
	Note that this term approaches $1/(2a) = \tau/2$ as $a \to 0$, which establishes the theorem.
	
	To see that a reward of $2/a+2$ is sufficiently motivating for an agent with a fixed present bias of $a$, assume such an agent is located at an arbitrary node $v_i$ with $0 \leq i \leq 1/a^2+1/a+1$.
	Her perceived cost for traversing $(v_i,v_{i+1})$ is at most $d_{a}(v_i,v_{i+1}) \leq 1 + a(2 + 1/a) = 2 + 2a$ if she plans to take the next shortcut at $v_{i+1}$.
	In the special case of $i = 1/a^2$, her perceived cost is $d_{a}(v_{1/a^2},v_{1/a^2 + 1}) = 1$ as she can reach $t$ directly via $(v_{1/a^2},v_{1/a^2 + 1})$.
	Either way, a reward of $r = 1/a(2 + 2a) = 2/a + 2$ covers her perceived cost for staying on the main path.
	In contrast, taking the immediate shortcut at $v_i$ has a perceived cost of $d_{a}(v_i,w) = 2 + a(1/a) = 3$.
	As we assume $a < 1/2$, the agent clearly perceives shortcuts as more expensive than the main path.
	Consequently, she follows the main path from $s$ to $t$ for a reward of $1/a(2 + 2a) = 2/a + 2$.
	
	Next consider an unbiased agent, i.e., an agent with a fixed present bias of~$1$.
	Such an agent strictly follows a cheapest path $P$ from $s$ to $t$.
	Furthermore, her perceived cost along $P$ never exceeds the total cost of $P$.
	Taking the first shortcut at $s$, we can bound the cost of a cheapest path from $s$ to $t$ from above by $2 + 1/a < 2/a + 2$.
	This implies that the unbiased agent successfully reaches $t$ for a reward of $2/a + 2$.
	
	It remains to show that no cost configuration $\tilde{c}$ can be motivating for all present bias configurations $\beta \in \{1,a\}^V$ if the reward is less than~$1/a^2$.
	For the sake of contradiction, assume such a $\tilde{c}$ exists.
	Note that the agent must not visit~$w$.
	The reason is that a reward of less than $1/a^2$ is not sufficient to make her traverse $(w,t)$ should her present bias become $\beta(w) = a$.
	However, to prevent the agent from taking a shortcut, $\tilde{c}$ must assign a cost greater than $1/a^2 - i$ to all shortcuts $i$ for $0\leq i \leq 1/a^2$.
	We proof this claim via an induction on $i$.
	For the base of the induction, let $i=1/a^2$.
	At $v_{1/a^2}$, exactly $2 + 1/a$ edges remain on the main path.
	Ignoring extra cost, there are two cheapest paths to~$t$, one along the main path and one along the current shortcut.
	Consequently, if we assume that the agent is momentarily unbiased, i.e., $\beta(v_{1/a^2}) = 1$, and therefore takes a cheapest path from $s$ to $t$, it becomes clear that $\tilde{c}$ must assign an extra cost greater than $0 = 1/a^2 - i$ to the current shortcut to prevent the agent from moving to~$w$.
	
	For the induction step let $i < 1/a^2$ and assume that each shortcut $j$ with $i < j \leq 1/a^2$ has an extra cost greater than $1/a^2 - j$.
	Our goal is to argue that the extra cost assigned to the shortcut $i$ must be greater than $1/a^2 - j$.
	When located at $v_{i}$, exactly $1/a^2 - i + 1/a + 2$ edges of the main path remain to~$t$.
	If the agent is currently unbiased, she perceives a cost of at least $1/a^2 - i + 1/a + 2$ for taking the main path.
	Clearly she cannot reduce this cost by planning to take a shortcut $j'$ with $j' > 1/a^2$.
	Should she consider a shortcut $j$ with $i < j \leq 1/a^2$ instead, she must first traverse $(j-i)$ edges of the main path.
	Together with the induction hypothesis her total perceived cost for such a plan is at least $(j-i) + 2 + 1/a + (1/a^2 - j) = 1/a^2 - i + 1/a + 2$.
	Consequently, her perceived cost for staying on the main path is at least $1/a^2 - i + 1/a + 2$.
	Since this exceeds the cost of shortcut $i$ by at least $1/a^2 - i$, we know that an extra cost greater than $1/a^2 - i$ must be assigned to the current shortcut to prevent the agent from walking onto $w$.
	This concludes the induction.
	
	We now know that all shortcuts $i$ with $0\leq i \leq 1/a^2$ have an extra cost greater than $1/a^2 - i$.
	By the same argument we have used in the inductive step it should be clear that each path from $s$ to $t$ has a cost of at least $1/a^2 + 1/a + 2$.
	Therefore if the agent is unbiased at $s$, we need a reward of $1/a^2 + 1/a + 2 > 1/a$ to motivate her.
	However, this contradicts our initial assumption on~$\tilde{c}$.
\end{proof}

Although Proposition~\ref{prop:lpv} implies that the price of variability can become substantially larger than the price of uncertainty, it should be noted that the task graph constructed in the proof of this proposition is close to a worst case scenario.
In particular, we can show that the price of variability cannot exceed $\tau + 1$, which is roughly twice the value obtained by Proposition~\ref{prop:lpv}.
To verify this upper bound, it is helpful to recall the proof of Theorem~\ref{thm:vaprox}.
In the process of establishing the approximation ratio of {\sc VariablePresentBiasApprox} we have argued that the cost configuration $\tilde{c}$ returned by the algorithm motivates any agent with a present bias configuration $\beta \in B^V$ for a reward of at most $(\tau + 1)r(G,\{\min B\})$.
Consequently, it holds true that $r(G,B^V )\leq (\tau + 1)r(G,\{\min B\})$, implying that the price of variability cannot exceed $\tau + 1$. 

\begin{corollary}
	The price of variability is at most $\tau + 1$.
\end{corollary}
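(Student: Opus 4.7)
The plan is to obtain this bound as a direct byproduct of the approximation analysis already carried out for Theorem~\ref{thm:vaprox}, rather than proving anything new from scratch. I would start by recalling the two quantities the algorithm {\sc VariablePresentBiasApprox} computes: the value $\alpha/b$, with $b = \min B$, extracted from a minmax path of $G$ under present bias $b$, and the cost configuration $\tilde{c}$ built on top of it. The proof of Theorem~\ref{thm:vaprox} establishes two things simultaneously that I would cite: (i) $\alpha/b \le r(G,\{b\})$, because no cost configuration can lower the perceived cost on the minmax path, and (ii) $\tilde{c}$ motivates every present bias configuration $\beta \in B^V$ once the reward is at least $(1+\tau)\alpha/b$.

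From (ii) I would immediately conclude $r(G,B^V) \le (1+\tau)\alpha/b$, and combining this with (i) gives $r(G,B^V) \le (1+\tau)\,r(G,\{b\})$. Since $b \in B$, the quantity $r(G,\{b\})$ is at most $\sup\{r(G,\{\beta\}) \mid \beta \in B\}$, so
\[
\frac{r(G,B^V)}{\sup\{r(G,\{\beta\}) \mid \beta \in B\}} \;\le\; \frac{(1+\tau)\,r(G,\{b\})}{\sup\{r(G,\{\beta\}) \mid \beta \in B\}} \;\le\; 1+\tau,
\]
which is exactly the price of variability by Definition~\ref{def:pov}.

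There is essentially no obstacle here: the entire content is packaged inside Theorem~\ref{thm:vaprox}. The only point one has to be slightly careful about is the direction of the denominator comparison — one needs a \emph{lower} bound on $\sup\{r(G,\{\beta\}) \mid \beta \in B\}$, and using $b = \min B \in B$ supplies it via the trivial inequality $\sup\{r(G,\{\beta\}) \mid \beta \in B\} \ge r(G,\{b\})$. Once this is in place, the proof reduces to a single line of algebra, so I would keep the write-up short and simply refer back to the reward bound derived in the proof of Theorem~\ref{thm:vaprox}.
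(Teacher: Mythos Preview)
Your proposal is correct and mirrors the paper's own argument exactly: both extract from the proof of Theorem~\ref{thm:vaprox} the inequality $r(G,B^V)\le(1+\tau)\,r(G,\{\min B\})$ and then bound the denominator of the price of variability below by $r(G,\{\min B\})$. The only difference is that you spell out the intermediate step via $\alpha/b$, whereas the paper states the conclusion in one sentence.
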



\bibliographystyle{plainurl}
\bibliography{bib}

\end{document}